\theoremstyle{plain}
\newtheorem{theorem}{Theorem}[section]
\newtheorem{lemma}[theorem]{Lemma}
\newtheorem{corollary}[theorem]{Corollary}
\newtheorem{definition}[theorem]{Definition}
\newtheorem{question}[theorem]{Question}
\def\RR{{\mathbb R}}
\def\EE{{\mathbb E}}
\def\NN{{\mathbb N}}
\def\SS{{\mathbb S}}
\def\cT{{\mathcal T}}
\def\cD{{\mathcal D}}
\def\cX{{\mathcal{X}}}
\def\cR{{\mathcal{R}}}
\def\cF{{\mathcal{F}}}
\def\Pr{{\mathrm Pr}}
\def \d{{\mathrm d}}
\newcommand{\eps}{\varepsilon}
\title{A Query-Driven Approach to Space-Efficient Range Searching}
\author[*,1,2]{Dimitris Fotakis}
\author[*,1]{Andreas Kalavas}
\author[*,1]{Ioannis Psarros}
\affil[1]{Archimedes, Athena Research Center, Greece}
\affil[2]{School of Electrical \& Computer Engineering, National Technical University of Athens, Greece}
\begin{document}
\maketitle
\footnotetext[1]{Partially supported by project MIS 5154714 of the National Recovery and Resilience Plan Greece 2.0 funded by the European Union under the NextGenerationEU Program.}
\begin{abstract}
We initiate a study of a query-driven approach to designing partition trees for range-searching problems.  Our model assumes that a data structure is to be built for an unknown query distribution that we can access through a sampling oracle, and must be selected 
such that it optimizes a meaningful performance parameter \emph{on expectation}. 
Our first contribution is to show that a near-linear sample of queries allows the construction of a partition tree with a near-optimal expected number of nodes visited during querying. We enhance this approach by treating node processing as a classification problem, leveraging fast classifiers like shallow neural networks to obtain experimentally efficient query times. 
Our second contribution is to develop partition trees using sparse geometric separators. Our preprocessing algorithm, based on a sample of queries, builds a balanced tree with nodes associated with separators that minimize query stabs on expectation; this yields both fast processing of each node and a small number of visited nodes, significantly reducing query time.
\end{abstract}

\section{Introduction}
Range searching is a fundamental problem in data structures, with applications in databases and computational geometry. A typical range searching problem 
asks to preprocess a dataset $P$ into a data structure so that for a query range $q$, the set $q\cap P$ can be reported efficiently. While there are various solutions for different families of ranges, space-efficient solutions, i.e., those requiring near-linear space for storing all required information, typically fall into the category of partition trees. Partition trees are a very natural class of data structures characterized by their hierarchical design: each node is associated with a subset $S$ of the input and the subsets associated with its children realize a partition of $S$. Classic examples of partition trees are kd-trees, quadtrees, and interval trees. A survey on data structures for range searching can be found in \cite{Ag04}. Even for simple ranges, such as halfspaces or balls, linear-size data structures suffer from query times of the form $n^{1-O(1/d)}$, where $d$ is typically the ambient dimension (or the VC dimension as in \cite{CW89}). This raises the question of whether one can get improved running times for data structures that adapt to the distribution of the input. 

Designing data-driven algorithms is an important aspect of modern data science and algorithm design. Instead of relying on worst-case guarantees, it is natural to optimize over a class of parameterized algorithms for the given (unknown) distribution of inputs. Optimization is performed with respect to some performance parameter, on expectation over the distribution of inputs, and typically relies on accessing a hopefully small number of samples from the distribution. More formally, given a class of algorithms $\mathcal{A}$ and a utility function $u:\RR^k \times \cX \to \NN$ that characterizes the performance of an algorithm $A_t\in \mathcal{A}$ with input $x\in \cX$, we aim for $t$ that (approximately) minimizes  $\EE_{{x\sim \cD}} [u(t,x)]$, where $\cD$ is the input distribution. An overview of results obtained by this approach can be found in \cite{B20}.  

Towards extending this point of view to data structures, we initiate a study on how well partition trees can adjust to the distribution of the queries. We assume that we are given access to the query distribution through a sampling oracle, and we are interested in bounding the number of required disclosed queries in order to argue about the expected performance of the constructed tree. This setting is motivated by various real-world situations where access to query sets can be ensured before building the data structure. For example, the input dataset may follow the same distribution as the queries; in such a scenario the input dataset can be used to extract information about the queries and build a data structure with improved performance. Moreover, such a ``query-driven" data structure can be used in parallel with other ``query-agnostic" data structures to improve overall performance. For instance, a system that processes a long stream of queries can initially use a data structure with optimal worst-query performance, and build a ``query-driven" data structure only after a certain amount of queries have appeared to improve response time in later queries.

\subsection{Related work}
Our approach is motivated by the fact that the worst-case analysis of algorithms is too pessimistic to give an actual hint about practical algorithm design, and   
can be seen as an attempt to formalize the fact that algorithms can learn to improve their expected performance. For an overview of different approaches to the analysis of algorithms beyond worst-case, the reader is referred to \cite{R21}. The approach more relevant to ours is that introduced in \cite{GR17}, which models algorithm selection as a statistical learning problem. Other notable examples include 
self-improving algorithms~\cite{ACCLMS11}, instance-optimality~\cite{ABC17} and smoothed analysis~\cite{ST09}. Related results on data structures include randomized partition trees whose query running times adapt to the difficulty of the point configuration~~\cite{DS15}, LSH trees whose cutting rules are optimized with respect to the given dataset~\cite{AB22}, and learning-augmented binary trees with optimized search assuming access to advice from a frequency estimation oracle. 
\subsection{Our results}
    We study query-driven data structures for range searching motivated by the following (informally-posed) question. 
    \begin{question}
    Can we build a partition tree with improved query time when we are given a sample of queries?
    \end{question}
    
    Our first contribution is to prove that given a sample of queries whose size is near-linear to the number of input points, we can build a partition tree that has a near-optimal expected \emph{visiting number}, i.e., the number of nodes required to be visited by the query algorithm is (nearly) the smallest possible, on expectation, over the query distribution. To prove this we adapt a known connection between partition and spanning trees, first observed in \cite{CW89} which allows us to reduce our question to that of computing a spanning tree with a small expected stabbing number, i.e., a small expected 
    number of edges whose intersection with a random query contains exactly one vertex. Using known sampling bounds, we bound the number of sampled queries required to estimate the stabbing number on all spanning trees. Then computing a spanning tree with a small stabbing number is essentially a minimum spanning tree (MST) computation where each weight corresponds to the number of queries stabbing each edge. Finally, we rely on \cite{CW89} to transform the resulting MST to a partition tree with a small expected visiting number. Throughout this informal presentation of our results, $\tilde{O}$ hides dependence on $\mathrm{polylog}(n)$, factors depending only on the dimension $d$, or factors depending only on the VC dimension of the query ranges. Our first main result can be summarized as follows.
    \begin{theorem}[Informal version of \Cref{thm:samplingcomplexity}]
        Given a set of $n$ points $P$, and $\tilde{O}(n)$ i.i.d. query ranges sampled from an unknown distribution $\cD_Q$, we can build a partition tree $T$ on $P$ in $\tilde{O}(n^3)$ time, such that $T$ has an expected visiting number within $O(\log n)$ from the optimal for $\cD_Q$. The construction of $T$ succeeds with probability $1-\frac{1}{\mathrm{poly}(n)}$. 
    \end{theorem}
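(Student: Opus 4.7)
The plan is to follow the reduction outlined just before the theorem: build a spanning tree of $P$ with small expected stabbing number under $\cD_Q$, then convert it to a partition tree via Chazelle--Welzl \cite{CW89}. Say a query $q$ \emph{stabs} an edge $e = \{u,v\}$ if it contains exactly one of $u,v$, and let $\sigma(T) := \EE_{q \sim \cD_Q}[|\{e \in T : q \text{ stabs } e\}|]$ be the expected stabbing number of a spanning tree $T$. By \cite{CW89}, any spanning tree of stabbing number $\sigma$ yields a partition tree of visiting number $O(\sigma \log n)$, and, in the other direction, an optimal partition tree of expected visiting number $V^*$ induces a spanning tree of stabbing number $O(V^*)$. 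It therefore suffices to compute a spanning tree $\hat T$ with $\sigma(\hat T) = O(\sigma(T^*))$, where $\sigma(T^*) := \min_T \sigma(T)$.

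The first step is uniform estimation. For each edge $e$ let $p_e := \Pr_{q \sim \cD_Q}[q \text{ stabs } e]$ and $R_e := \{q : q \text{ stabs } e\}$. Since membership in $R_e$ is a symmetric difference of two containments in the original range family, the VC dimension of the induced edge range space is bounded in terms of the (dual) VC dimension of the queries. A sample of $N = \tilde O(n)$ i.i.d.\ queries $q_1,\dots,q_N$ therefore suffices, via Haussler--Welzl $\eps$-approximation (or, equivalently, Hoeffding plus a union bound over the $\binom{n}{2}$ edges), to guarantee $|\hat p_e - p_e| \leq \eps$ for all edges $e$ simultaneously with probability $1 - 1/\mathrm{poly}(n)$. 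This yields $|\hat \sigma(T) - \sigma(T)| \leq (n-1)\eps$ uniformly over all spanning trees, where $\hat\sigma(T) := \sum_{e \in T} \hat p_e$.

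The second step is optimization. Form the complete weighted graph on $P$ with $w(e) = |\{i : q_i \text{ stabs } e\}|$, run any MST algorithm to obtain $\hat T$, and invoke \cite{CW89} to convert $\hat T$ into a partition tree. Since $w(e)/N = \hat p_e$, the tree $\hat T$ exactly minimizes $\hat\sigma$, and a standard two-sided uniform-convergence argument gives $\sigma(\hat T) \leq \sigma(T^*) + 2(n-1)\eps$. Choosing $\eps = \Theta(\sigma(T^*)/n)$ yields $\sigma(\hat T) = O(\sigma(T^*))$; appealing to the classical lower bound $\sigma(T^*) = \Omega(\sqrt n)$ of \cite{CW89} for the usual range families keeps the tolerance at $\Omega(1/\sqrt n)$, which is achievable with $N = \tilde O(n)$ samples. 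The resulting partition tree has expected visiting number $O(\sigma(\hat T) \log n) = O(V^* \log n)$. For the running time, computing all $\binom{n}{2}$ weights against $\tilde O(n)$ sampled queries dominates at $\tilde O(n^3)$, while MST and partition-tree assembly each take at most $\tilde O(n^2)$.

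The key technical hurdle is the uniform-convergence step: a union bound over the $n^{n-2}$ spanning trees is hopeless, so the argument must be routed through per-edge uniform estimation on the induced range space $\{R_e\}$, and one must then rely on the geometric lower bound on $\sigma(T^*)$ to ensure that an additive per-edge error of $\Theta(1/\sqrt n)$ translates to only a constant-factor loss at the tree level. The remaining ingredients (the \cite{CW89} equivalence between low-stabbing spanning trees and low-visiting partition trees, and MST computation on a weighted graph) are off-the-shelf, so the heart of the proof lies in carefully setting up the edge range space, tuning $\eps$, and tracking how the per-edge error propagates through the minimum of a sum over the $n-1$ tree edges.
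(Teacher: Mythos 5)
Your overall architecture matches the paper's: reduce to finding a spanning tree with small expected stabbing number via the \cite{CW89} correspondence, estimate per-edge stabbing probabilities from the sample using the bounded VC dimension of the induced edge range space, compute an MST on the empirical weights, and convert back. The running-time accounting is also right. However, there is a genuine gap in your uniform-convergence step, and it is exactly the point where the paper has to do something more careful.

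You control the per-edge error additively by $\eps$, incur a total error of $(n-1)\eps$ over a tree, and then argue this is a constant fraction of $\sigma(T^*)$ by ``appealing to the classical lower bound $\sigma(T^*)=\Omega(\sqrt n)$ of \cite{CW89}.'' That lower bound concerns the \emph{worst-case} (maximum over all ranges) stabbing number; it says nothing about the \emph{expected} stabbing number under an arbitrary distribution $\cD_Q$, which is the quantity $\sigma(T^*)$ in this theorem. The expected stabbing number can be arbitrarily small --- even zero, e.g., if $\cD_Q$ is supported on ranges that contain all of $P$ or none of it --- and the theorem makes no assumption ruling this out. In that regime your prescription $\eps=\Theta(\sigma(T^*)/n)$ forces $\eps$ to vanish, and a plain additive $\eps$-approximation would require $\omega(n)$ samples (indeed, this is precisely why the paper remarks that a direct uniform-convergence bound on $\sigma$ gives quadratic sample complexity). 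The paper's fix, in \Cref{lemma:spanningtreespreserved} via the relative approximation of \Cref{theo:etaepsilonapproxfunctionsrelative} with $\eta=1/n$, is to get a \emph{mixed} guarantee: edges stabbed with probability above $1/n$ are estimated with constant multiplicative error, while the remaining ``light'' edges contribute additive error $O(1/n)$ each, hence $O(1)$ in total over the $n-1$ tree edges. The leftover additive $O(1)$ is then absorbed because the optimal expected visiting number is always at least $1$ (the root is visited by every query). Replacing your lower-bound appeal with this relative-error argument closes the gap; the rest of your proof goes through as written.
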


    The visiting number provides an important characterization of the query algorithm's performance, but it provides no information about the time needed to process each node. We address this in practice. In our implementation, we treat the processing of each node, as a classification problem, where one needs to infer whether the query range completely covers the pointset associated with the node (in which case the output contains all points), or whether there is a non-empty intersection with the pointset (in which case the query algorithm visits both children).
    By relying on shallow neural networks (NNs) to support fast inference, we implement a variant of partition trees whose query time is dominated by the visiting number, which is rigorously proven to be nearly optimized with respect to the query distribution. The use of NNs naturally induces errors in the final output, which are experimentally shown to be insignificant. 

    In our second contribution, we focus on range searching where ranges are defined as Euclidean balls of fixed radius. In this setting, the radius is known during preprocessing, and queries are completely described by their centers.  
    We build partition trees that allow for fast (worst-case) processing of each node at the cost of potentially having a large visiting number; an issue that we mitigate in a data-driven manner. Using a sufficiently large sample of query centers, at each level of the partition tree, we greedily choose a ``ring" separator, i.e., a sphere that is as sparse as possible with respect to the query distribution near its boundary and divides the pointset into two balanced parts. 
    This leads to a balanced partition tree where each node is associated with a geometric separator, and the query algorithm only proceeds to both children if the query stabs the separator, meaning that the query has a non-empty intersection with both sides, an event that is fast to determine and unlikely to happen by construction. We also upper bound the query time for a certain class of distributions that generalize the uniform distribution, and we test the efficacy of our method experimentally.  
    Our second main result is the following.     \begin{theorem}[Informal version of \Cref{T424}]
    Given a set of $n$ points $P$ in $\RR^d$, a radius $r>0$,  $\tilde{O}(n)$ query centers sampled from an unknown distribution $\cD_q$, for which the probability that any Euclidean ball of radius $r$ contains a point is at most $\alpha$, we can build a partition tree $T$ on $P$ for the range searching problem of balls of radius $r$, with expected query time in $\tilde{O}(n\alpha^{O(1/d)})$, space complexity in $\tilde{O}(n)$, and preprocessing time in $\tilde{O}(n^{d+2})$. The construction of $T$ is successful with probability $1-\frac{1}{\mathrm{poly}(n)}$.
    \end{theorem}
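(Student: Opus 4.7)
The plan is to build the partition tree top-down: at each node we select a sphere separator that balances the split and minimizes the empirically observed stabbing frequency on a single reusable sample of $\tilde O(n)$ queries. To justify the sampling, observe that the family of stabbing rings $\{x \in \RR^d : |\,\|x-c\|-\rho\,| \le r\}$ is a Boolean combination of two Euclidean balls and thus has VC dimension $O(d)$. The standard $\varepsilon$-approximation theorem then yields that $\tilde O(n)$ i.i.d.\ samples from $\cD_q$ suffice to uniformly estimate the true stabbing probability of every such ring to additive error $\tilde O(1/n)$, with failure probability $1/\mathrm{poly}(n)$; the same sample is reused at every internal node.

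At a node with $m$ points $S_v$, the greedy construction enumerates $O(m^{d+1})$ candidate spheres (determined by $d+1$ points of $S_v$), discards those that do not split $S_v$ into parts of size between $m/3$ and $2m/3$, and picks the remaining candidate with lowest empirical stabbing fraction. Each candidate is evaluated against all $\tilde O(n)$ samples in $\tilde O(n)$ time, so the per-node cost is $\tilde O(m^{d+1}\,n)$; summing over a balanced tree of depth $O(\log n)$ gives total preprocessing time $\tilde O(n^{d+2})$, while the stored tree takes $\tilde O(n)$ space.

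The crux is the geometric existence lemma: some balanced sphere has stabbing probability $O(\alpha^{c/d})$ for a constant $c$. Fix a point $c_0 \in S_v$, and let $[\rho_1,\rho_2]$ be the interval of radii $\rho$ for which $S(c_0,\rho)$ produces a balanced split. The argument combines two bounds. First, a query at $q$ stabs $S(c_0,\rho)$ iff $\rho \in [\|q-c_0\|-r,\|q-c_0\|+r]$, an interval of length $2r$; integrating the stabbing indicator over $\rho \in [\rho_1,\rho_2]$ and over $q\sim \cD_q$ yields $\int_{\rho_1}^{\rho_2}\Pr_q[B(q,r)\text{ stabs }S(c_0,\rho)]\,d\rho \le 2r$, so by averaging some $\rho^*$ achieves stabbing probability at most $2r/(\rho_2-\rho_1)$. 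Second, the stabbing annulus of $S(c_0,\rho^*)$ has width $2r$ and outer radius $O(\rho_2)$, hence is covered by $O((\rho_2/r)^{d-1})$ balls of radius $r$, each of mass at most $\alpha$ by assumption; this gives stabbing probability at most $O(\alpha\,(\rho_2/r)^{d-1})$. Balancing the two at the threshold $\rho_2-\rho_1 = \Theta(r\,\alpha^{-1/d})$, and, if necessary, sweeping over several choices of $c_0$ and rank-intervals so that at least one regime is available, yields the claimed $O(\alpha^{c/d})$ bound.

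The query analysis then reduces to the recurrence $V(m) \le 1+(1+p)V(m/2)$ with $p=O(\alpha^{c/d})$ on the balanced tree, producing $V(n)=\tilde O(n\,\alpha^{O(1/d)})$; a union bound over the $O(n)$ internal nodes upgrades the single-sample estimation guarantee to a high-probability guarantee across the construction. The hardest step is the existence lemma, since both the averaging and the covering bounds are sensitive to the absolute scale $\rho_2$ around the chosen center, and one must argue that the two regimes can be balanced regardless of how $S_v$ is configured. A secondary subtlety is to verify that the $\tilde O(1/n)$ sampling noise is dominated by the target per-node stabbing probability, so that the greedy empirical choice is a near-optimizer of the true expected stabbing probability at every level.
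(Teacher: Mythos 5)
Your overall architecture matches the paper's (greedy ring separators chosen to minimize empirical stabbing mass on a reusable $\tilde O(n)$ query sample, a VC bound of $O(d)$ for annuli, enumeration of combinatorially distinct separators in $\tilde O(n^{d+2})$ time, and a query bound of the form (expected number of stabbed separators) $\times$ (tree height)). However, the step you yourself flag as the crux --- the existence of a balanced, sparse sphere --- has a genuine gap as written. Your two bounds live at different scales: the averaging bound gives stabbing probability $2r/(\rho_2-\rho_1)$, while the covering bound gives $O(\alpha(\rho_2/r)^{d-1})$, and nothing ties $\rho_2-\rho_1$ to $\rho_2$. A concrete failure mode: the set of radii around your chosen center $c_0$ that yield a balanced split can be a thin shell far from the center, e.g.\ $[\rho_1,\rho_2]=[Kr,\,Kr+r]$ for huge $K$; then the averaging bound gives $O(1)$ and the covering bound gives $O(\alpha K^{d-1})$, both useless, and ``sweeping over several choices of $c_0$'' is not an argument that some center avoids this. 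The paper closes exactly this gap by choosing the scale geometrically: let $B_1$ be the \emph{minimum-radius} ball containing an $n/(c+1)$ fraction of the points ($c$ the doubling constant) and $B_2$ its concentric double. Minimality plus doubling forces $B_2$ to exclude an $n/(c+1)$ fraction, so the admissible radius interval has length equal to the radius of $B_1$ --- the interval length and the outer radius are now the \emph{same} quantity. Then a dichotomy finishes: either the total mass of $B_2\setminus B_1$ is already $\alpha^{O(1/\log c)}$, or the covering bound forces $r$ to be at most an $\alpha^{\Theta(1/\log c)}$ fraction of that width, and averaging over the $\Omega(\alpha^{-\Theta(1/d)})$ disjoint sub-annuli of width $2r$ yields one of mass $O(\alpha^{O(1/d)})$.

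Two secondary issues. First, the balance you demand ($1/3$--$2/3$) is not what the construction can deliver; the paper only guarantees that each side retains a $1/(c+1)$ fraction with $c=2^{\Theta(d)}$, so the height is $O(c\log n)=2^{O(d)}\log n$, which is absorbed into $\tilde O(\cdot)$ but matters if you unwind the constants. Relatedly, your recurrence $V(m)\le 1+(1+p)V(m/2)$ needs $p$ to bound the stabbing probability \emph{conditioned on reaching the node}; the paper sidesteps this by summing unconditional stabbing probabilities over all $O(n)$ nodes via linearity of expectation, giving $\EE[b]=O(n)\alpha^{O(1/d)}+O(1)$ bad nodes and query time $(\EE[b]+1)\cdot h$. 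Second, enumerating only spheres through $d+1$ points of $S_v$ does not visit every combinatorial cell with respect to the query sample; the paper's arrangement is over $P\cup S_Q$ and uses three polynomial families (for radii $r_s$, $r_s+r$, $r_s+2r$) so that the empirical ring mass is constant on each cell. These are fixable, but the existence lemma as you stated it is not.
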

    Notice that the query time depends on the parameter $\alpha\leq 1$ that captures the sparsity of the query distribution.

    In \Cref{section:prel} we introduce notation and define necessary notions, in \Cref{sec3} we present our results on selecting a partition tree that nearly minimizes the number of visited nodes during query, in \Cref{sec4} we present and analyze separator trees, and in \Cref{section:experiments} we present our experiments.

\section{Preliminaries}
\label{section:prel}
For any set $X$, $2^X = \{x \subseteq X\}$ and ${{X}\choose{2}} = \{(a,b)\mid a\neq b\in X \}$. 
Given a metric $\mathcal{M}= (M,d_M)$, a point $p\in M$ and a radius $r>0$, $B_M(p,r)$ denotes the ball of radius $r$ centered at $p$, i.e., $B_M(p,r) := \{x\in M \mid d_M(x,p)\leq r\}$. Moreover, for any point $p\in M$, and parameters $r_2>r_1>0$, let $R_M(p,r_1,r_2):=B_M(p,r_2)\setminus B_M(p,r_1)$ denote the ring of radii $r_1,r_2$ centered at $p$. We may refer to such a ring as annulus when $\mathcal{M}$ is the Euclidean metric $\ell_2^d$. The doubling constant of a metric space is the smallest integer $c$ such that for any $r>0$ any ball of radius $r$ can be covered by $c$ balls of radius $r/2$. 

A  partition tree $T$ built on a set of points $P$ is a tree where each node $v$ is associated with a subset $P_v \subseteq P$, and for each node $v$ with children $u_1,\ldots,u_{\ell}$, we have $P_v=\bigcup_i P_{u_i}$ and $\forall i\neq j~P_{u_i} \cap P_{u_j} = \emptyset$. We say that a query range $q$ stabs a set $S$ if $S \cap q \neq \emptyset$ and $S \cap q \neq S$. 
For a partition tree $T$, and a query range $q$, the visiting number of $T$ w.r.t.~$q$, denoted by $\zeta(T,q)$ is the number of nodes $v$ in $T$ 
which are either root of $T$ or their parent $u$ is stabbed by $q$ (meaning that $P_u$ is stabbed by $q$).

Fix a countably infinite domain $X$.  the so-called Vapnik-Chernovenkis dimension~\cite{Sau72, She72, VC71} (VC dimension) of a set $\cF$  
of functions from $X$ to $\{0, 1\}$, denoted by $VCdim(\cF)$, is
the largest $d$ such that there is a set $S=\{x_1, \ldots,x_d\}\subseteq X$, for which, for any $s\in 2^S$ there is a function $f\in \cF$ such that $p\in s \iff f(p)=1$. 
A \emph{range space} is defined as a 
pair $(X,\cF)$ where $X$ is the domain and $\cF$ is a set of functions from $X$ to $\{0, 1\}$, or equivalently a 
pair of sets $(X,\cR)$, where $X$ is the domain and $\cR$ is a set of ranges, i.e., subsets of $X$. 
For any $\vec{x} =(x_1,\ldots,x_m) \in X^m$ let $\mu_{f}(\vec{x})=m^{-1}\cdot \sum_{i=1}^m f(x_i)$. For any $\eta>0, r\geq 0,s\geq 0$, let  $d_{\eta}(r,s) = \frac{|r-s|}{\eta+r+s}$. 

\begin{theorem}[{\cite{10.5555/338219.338267}}]
\label{theo:etaepsilonapproxfunctions}
    Let $X$ be a countably infinite domain. Let $\cF$ be a set of functions from $X$ to $\{0,1\}$ with VC dimension $D$ and let $\cD$ be a probability distribution over $X$. Also, let $\epsilon, \delta, \eta \in (0,1)$. There is a universal constant $c$ such that if $m \geq \frac{c}{\eta \cdot \epsilon^2} \cdot \left( D \log \left(\frac{1}{\eta}\right) + \log\left(\frac{1}{\delta}\right) \right)$
    then if we sample $\vec{x} = \{x_1,\ldots,x_m\}$  independently at
random according to $\cD$, then with probability at least $1 - \delta$, for any $f\in \cF$, 
    $d_{\eta}\left( \mu_f(\vec{x}), \EE_{p\sim\cD}[f(p)] \right)\leq \eps.$ 
\end{theorem}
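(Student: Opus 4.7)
The plan is to follow the classical symmetrization--Sauer--Shelah--Chernoff pipeline from VC theory, adapted to the relative-error distance $d_\eta$. Let $\vec{x}=(x_1,\ldots,x_m)$ be the i.i.d.\ sample from $\cD$, write $p_f = \EE_{p\sim\cD}[f(p)]$, and introduce an independent ghost sample $\vec{y}=(y_1,\ldots,y_m)$ also drawn from $\cD$.

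First I would apply a symmetrization step to replace the target $p_f$ by the ghost empirical mean $\mu_f(\vec{y})$. Concretely, if $d_\eta(\mu_f(\vec{x}), p_f) > \eps$ for some $f\in\cF$, then a Chebyshev bound on $\mu_f(\vec{y})$ --- whose variance is $p_f(1-p_f)/m$, hence small relative to $\max(p_f,\eta)$ for $m\gtrsim 1/(\eta\eps^2)$ --- shows that with probability at least $1/2$ over $\vec{y}$ one also has $d_{\eta/2}(\mu_f(\vec{x}),\mu_f(\vec{y})) \geq \eps/2$. This reduces the task to upper-bounding the probability that the two empirical means disagree in the $d_{\eta/2}$ metric by $\delta/2$.

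Next I would condition on the pooled multiset $Z=\vec{x}\cup\vec{y}$ of size $2m$ and re-randomize the split by a uniformly random permutation of the $2m$ indices. The Sauer--Shelah lemma bounds the number of distinct restrictions $\{(f(z_1),\ldots,f(z_{2m})):f\in\cF\}$ by $(2em/D)^D$, so a union bound reduces the problem to the following combinatorial statement: for every fixed $0/1$-vector on $Z$ with $k$ ones, a uniformly random size-$m$ subset has its fraction of ones deviating in $d_{\eta/2}$ from the complementary fraction by more than $\eps/2$ with probability at most $\delta\cdot(2em/D)^{-D}/2$.

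The technical heart is a multiplicative Chernoff--Hoeffding bound for the resulting hypergeometric distribution. In the heavy regime $k/(2m)\gtrsim \eta$, the relative Chernoff bound yields deviation probability $\exp(-\Omega(\eps^2 k)) = \exp(-\Omega(\eta\eps^2 m))$; in the light regime $k/(2m)\lesssim \eta$, the additive Chernoff bound already gives $\exp(-\Omega(\eta\eps^2 m))$ because the denominator $\eta+r+s$ in $d_{\eta/2}$ is dominated by $\eta$ and the required additive accuracy is $\Omega(\eta\eps)$. Combining the two regimes with the Sauer--Shelah bound and solving for $m$ produces the stated $m = \Omega(\eta^{-1}\eps^{-2}(D\log(1/\eta)+\log(1/\delta)))$. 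The main obstacle is precisely this case split: one must show that the variance-sensitive denominator of $d_\eta$ is correctly exploited by the relative Chernoff bound where the mean is large and by the additive bound where it is small, so that the two pieces glue together into a single tail of the form $\exp(-\Omega(\eta\eps^2 m))$ --- this is what upgrades the standard $1/\eps^2$ VC rate to the sharper $1/(\eta\eps^2)$ rate needed for relative approximations.
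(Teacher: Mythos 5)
The paper does not prove this statement: it is imported verbatim from the cited reference (Li, Long, and Srinivasan), and the only argument the paper supplies in this vicinity is the derivation of \Cref{theo:etaepsilonapproxfunctionsrelative} from it with $\eps=1/9$. So there is no in-paper proof to compare against; what you have written is a reconstruction of the standard proof of the cited result, and your overall architecture (ghost-sample symmetrization, permutation argument, Sauer--Shelah, two-regime hypergeometric tail) is indeed the right one and is essentially what the original reference does. Two remarks. First, the symmetrization step silently uses a triangle inequality for $d_\eta$ to pass from $d_\eta(\mu_f(\vec{x}),p_f)>\eps$ and $d_\eta(p_f,\mu_f(\vec{y}))\leq\eps/2$ to a lower bound on $d_{\eta/2}(\mu_f(\vec{x}),\mu_f(\vec{y}))$; this is fine because $d_\eta$ is in fact a metric (a lemma one must state and prove, due to Haussler), but it is a step you should not elide.

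Second, and more substantively: your final union bound does not quite deliver the stated sample complexity. After Sauer--Shelah you union over $(2em/D)^D$ projections, each failing with probability $\exp(-\Omega(\eta\eps^2 m))$, so you need $\eta\eps^2 m \gtrsim D\log(m/D)+\log(1/\delta)$. Since $m \approx \frac{1}{\eta\eps^2}(\cdots)$, we have $\log(m/D) \approx \log\frac{1}{\eta}+2\log\frac{1}{\eps}$, and your argument therefore proves the theorem with $D\bigl(\log\frac{1}{\eta}+\log\frac{1}{\eps}\bigr)$ in place of $D\log\frac{1}{\eta}$. Eliminating the $\log(1/\eps)$ factor is precisely the nontrivial refinement in the cited reference (a chaining argument rather than a single-scale union bound), so "solving for $m$ produces the stated bound" is not justified as written. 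For the purposes of this paper the discrepancy is harmless --- \Cref{theo:etaepsilonapproxfunctionsrelative} instantiates $\eps$ as the constant $1/9$, so the extra term is absorbed --- but as a proof of the theorem as stated, this is a genuine gap you would need to close or acknowledge.
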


By setting $\eps$ appropriately, we obtain the following corollary. A similar result is obtained in \cite{HS11}. Due to minor differences in the statement, we include a proof (diverted to \Cref{app:corapproxfunctions}).
\begin{restatable}{corollary}{corapproxfunctions}
\label{theo:etaepsilonapproxfunctionsrelative}
    Let $X$ be a countably infinite domain. Let $\cF$ be a set of functions from $X$ to $\{0,1\}$ with VC dimension $D$ and let $\cD$ be a probability distribution over $X$. Also, let $ \delta, \eta \in (0,1)$. There is a universal constant $c$ such that if $m \geq \frac{c}{\eta } \cdot \left( D \log \left(\frac{1}{\eta}\right) + \log\left(\frac{1}{\delta}\right) \right)$
    then if we sample $\vec{x} = \{x_1,\ldots,x_m\}$  independently at
random according to $\cD$, then with probability at least $1 - \delta$, for any $f\in \cF$, 
\[
 \left\lvert   \mu_f(\vec{x})-\EE_{p\sim\cD}[f(p)]\right\rvert \leq (3/8) \cdot \max\{\EE_{p\sim\cD}[f(p)] , \eta\}.
\]
\end{restatable}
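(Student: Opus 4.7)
\medskip

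\noindent\textbf{Proof proposal.}
The plan is to deduce the corollary from \Cref{theo:etaepsilonapproxfunctions} by instantiating the parameter $\epsilon$ to a sufficiently small absolute constant and then performing a case analysis on whether the true mean dominates $\eta$ or vice versa.

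First, I would invoke \Cref{theo:etaepsilonapproxfunctions} with $\epsilon = 1/9$ (any small enough constant works; this choice leads cleanly to the $3/8$ factor in the conclusion). With this fixed choice, the sample size requirement $m \geq \frac{c'}{\eta \epsilon^2}(D \log(1/\eta) + \log(1/\delta))$ collapses to the bound $m \geq \frac{c}{\eta}(D\log(1/\eta) + \log(1/\delta))$ stated in the corollary, for a suitable universal constant $c$. The theorem then guarantees that with probability at least $1-\delta$, simultaneously for every $f \in \cF$, writing $a := \mu_f(\vec{x})$ and $b := \EE_{p \sim \cD}[f(p)]$, one has
\[
|a - b| \;\leq\; \epsilon\,(\eta + a + b).
\]

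Next, I would split into two cases according to $\max\{b,\eta\}$. If $b \geq \eta$, I bound $\eta \leq b$ on the right-hand side and use $a \leq b + |a-b|$ to obtain
\[
|a-b| \;\leq\; \epsilon(\eta + a + b) \;\leq\; \epsilon(3b + |a-b|),
\]
which rearranges to $|a-b| \leq \tfrac{3\epsilon}{1-\epsilon}\, b$. If instead $b < \eta$, I use $a \leq b + |a-b| < \eta + |a-b|$ to get $|a-b| \leq \epsilon(3\eta + |a-b|)$, yielding $|a-b| \leq \tfrac{3\epsilon}{1-\epsilon}\, \eta$. In both cases $|a-b| \leq \tfrac{3\epsilon}{1-\epsilon}\,\max\{b,\eta\}$, and the choice $\epsilon = 1/9$ makes $\tfrac{3\epsilon}{1-\epsilon} = 3/8$, which is exactly the desired inequality.

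Both steps are elementary; there is no real obstacle. The only thing to be careful about is ensuring the algebra in the two cases indeed produces a constant factor of at most $3/8$, which pins down the concrete constant $\epsilon = 1/9$ (or any smaller value). Any $\epsilon \leq 1/9$ suffices, so this only affects the universal constant $c$ hidden in the sample size, and the statement of the corollary is left intact.
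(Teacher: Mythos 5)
Your proposal is correct and follows essentially the same route as the paper: invoke \Cref{theo:etaepsilonapproxfunctions} with $\epsilon=1/9$ and split on whether $\EE_{p\sim\cD}[f(p)]$ exceeds $\eta$. The only (cosmetic) difference is that you absorb $\mu_f(\vec{x})\leq \EE_{p\sim\cD}[f(p)]+|\mu_f(\vec{x})-\EE_{p\sim\cD}[f(p)]|$ into a single rearrangement giving $\tfrac{3\epsilon}{1-\epsilon}=3/8$, whereas the paper sub-cases on the sign of the difference; both yield the stated bound.
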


\section{Nearly-Optimal Expected Visiting Number} \label{sec3}
In this section, we study the problem of selecting a partition tree that minimizes the expected number of visited nodes during query. 
We are given a set of points $P\subset \RR^d$, and we want to build a partition tree $T$ on $P$ to answer range searching queries following some unknown distribution $\cD_Q$.  
We can see a certain amount of queries $S_Q$ independently chosen from $\cD_Q$ before we build $T$; the goal is to build a tree that is best fitted for $\cD_Q$. Let $\mathcal{T}$ be the set of all possible partition trees built on $P$.  We aim for a tree $\tilde{T}$ such that
\[
\EE_{q\sim \cD_Q} [\zeta(\tilde{T},q)] \leq C\cdot  \min_{T\in\mathcal{T}}\EE_{q\sim \cD_Q} [\zeta(T,q)] , 
\]
where $C$ is as small as possible. 
We are interested in answering two questions regarding $\tilde{T}$, namely what is the minimum required size of $S_Q$ so that we can build $\tilde{T}$, and what is the computational complexity of computing such a tree $\tilde{T}$. We first relate the visiting number of partition trees with the more convenient notion of the stabbing number of graphs.

Let $(X, \cR)$ be a range space. For any graph $G$  with vertices $P\subseteq X$ and any $q\in \cR$, $\sigma(G,q)$ denotes the number of edges of $G$ stabbed by $q$ (an edge is a set of two points). 
The following lemma shows a close relation between partition trees with small visiting numbers and spanning trees with small stabbing numbers. This relation was first established by \cite{CW89}; where they focused on the maximum (over ranges) visiting and stabbing numbers. Our statement is stronger than theirs to fit our needs, but the proof is exactly the same. We include the proof for completeness in \Cref{app:visitingstabbing}.

\begin{restatable}[Adapted from \cite{CW89}]{lemma}{lemmavisitingstabbing}
\label{lemma:visitingstabbing}
Let $( X, \cR)$ be a range space and let $P\subseteq X$ be a set of $n$ points:
\begin{enumerate}
    \item If $T$ is a spanning tree of $P$, then there exists a spanning path $\Pi$ 
    such that for any $q\in \cR$, $\sigma(\Pi,q)\leq 2\cdot \sigma(T,q)$. \label{item1}
    \item If $\Pi$ is a spanning path of $P$, then there exists a balanced binary partition
tree $T$ for $P$ such that for any $q\in \cR$, $\zeta(T,q)\leq (2\lceil \log n \rceil+1) \cdot \sigma(\Pi,q)$. \label{item2} 
\item If $T$ is a partition tree for $P$, then there exists a spanning path $\Pi$ for $P$ such that for any $q\in \cR$, $\sigma(\Pi,q)\leq \zeta(T,q)$. \label{item3}
\end{enumerate}     
\end{restatable}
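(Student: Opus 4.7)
The plan is to treat the three items separately, since each one is a different standard transformation between tree-like data structures, and collectively they show the equivalence (up to logarithmic factors) of the visiting and stabbing notions at the level of individual queries.

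For item~\ref{item1}, I will turn the spanning tree $T$ into a spanning path $\Pi$ by taking an Euler tour (each edge of $T$ traversed exactly twice) and then shortcutting it to a Hamilton path by listing the vertices in order of first visit. Each edge $(v_i,v_{i+1})$ of $\Pi$ corresponds to the subpath of the Euler tour that runs between the first occurrences of $v_i$ and $v_{i+1}$, and these subpaths partition the Euler tour edges. If $q$ stabs $(v_i,v_{i+1})$, then exactly one of $v_i,v_{i+1}$ lies in $q$, so the corresponding Euler-tour subpath must contain at least one stabbed tree edge (a parity argument on membership in $q$). Since each edge of $T$ appears in at most two such subpaths (once per traversal), the bound $\sigma(\Pi,q)\leq 2\,\sigma(T,q)$ follows.

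For item~\ref{item2}, I will build the balanced binary partition tree by recursive bisection of $\Pi$: every node is associated with a contiguous block of vertices on $\Pi$, and its two children split the block in half, giving depth $\lceil\log n\rceil$. At a fixed level the blocks are disjoint, so a block $P_v$ can only be stabbed by $q$ when it contains at least one stabbed edge of $\Pi$; hence at most $\sigma(\Pi,q)$ nodes are stabbed per level. Since every visited non-root node is a child of a stabbed node, the number of nodes visited at depth $i+1$ is at most $2\sigma(\Pi,q)$, and summing across the $\lceil\log n\rceil$ levels (plus the root) gives the claimed $(2\lceil\log n\rceil+1)\sigma(\Pi,q)$ bound.

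For item~\ref{item3}, I will define $\Pi$ as the ordering of the leaves (i.e.\ the points of $P$) induced by an in-order traversal of $T$. Each edge $(p,p')$ of $\Pi$ joins two points whose lowest common ancestor $v$ in $T$ places them in different children of $v$; if the edge is stabbed then $P_v$ is stabbed as well. For any fixed internal node $v$, the in-order traversal lists the children of $v$ one block at a time, so the number of $\Pi$-edges whose LCA equals $v$ is exactly (\#children of $v$)\,$-1$. Summing over stabbed nodes $v$ gives $\sigma(\Pi,q)\le \sum_{v\text{ stabbed}}(\deg(v)-1)\le \zeta(T,q)-1$, using that the visiting number counts the root plus the children of every stabbed node.

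The only delicate point is the parity argument in item~\ref{item1}: one must be careful to account for the fact that the Euler-tour subpaths used for shortcutting share no edges, so that each tree edge is charged at most twice in total rather than once per Hamilton edge that happens to cross it. The remaining two items follow by direct combinatorial bookkeeping once the correct construction (recursive bisection for item~\ref{item2}, in-order leaf traversal for item~\ref{item3}) is in place.
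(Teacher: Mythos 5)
Your proof is correct and takes essentially the same route as the paper's: the Euler-tour shortcut in item~\ref{item1} is exactly the paper's depth-first-search path (order of first visit), and items~\ref{item2} and~\ref{item3} use the same recursive-bisection and left-to-right leaf-ordering constructions. The only cosmetic difference is in item~\ref{item3}, where you bound $\sigma(\Pi,q)$ by aggregating the $\Pi$-edges per stabbed lowest common ancestor, whereas the paper charges each stabbed edge injectively to a distinct visited child; both yield the stated inequality.
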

The following corollary follows from \Cref{lemma:visitingstabbing} (\Cref{item3}).
\begin{corollary}
\label{cor:ubstabbing}
Let $\cD_Q$ be a distribution over query ranges and let $P$ be a set of $n$ points. Let $\Phi$ be the spanning tree of $P$ that minimizes $\EE_{q\sim \cD_Q}[\sigma(\Phi,q)]$ over all spanning trees of $P$. Then, 
\[
\EE_{q\sim \cD_Q}[\sigma(\Phi,q)]
\leq \min_{T\in \cT} \EE_{q\sim \cD_Q}[\zeta(T,q)] . 
\]
\end{corollary}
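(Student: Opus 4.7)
The plan is to apply \Cref{item3} of \Cref{lemma:visitingstabbing} to an arbitrary partition tree and then optimize. Fix any partition tree $T \in \cT$. By \Cref{item3}, there exists a spanning path $\Pi_T$ of $P$ such that $\sigma(\Pi_T, q) \leq \zeta(T, q)$ pointwise for every $q \in \cR$. Since this inequality holds pointwise in $q$, taking expectations over $q \sim \cD_Q$ preserves it, so $\EE_{q \sim \cD_Q}[\sigma(\Pi_T, q)] \leq \EE_{q \sim \cD_Q}[\zeta(T, q)]$.

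Next I would observe that a spanning path is in particular a spanning tree of $P$, so $\Pi_T$ lies in the feasible set over which $\Phi$ is optimized. By the defining property of $\Phi$, this gives $\EE_{q \sim \cD_Q}[\sigma(\Phi, q)] \leq \EE_{q \sim \cD_Q}[\sigma(\Pi_T, q)] \leq \EE_{q \sim \cD_Q}[\zeta(T, q)]$. Since $T \in \cT$ was arbitrary, taking the minimum over $T$ on the right yields the claim.

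There is essentially no obstacle here: the statement is a direct one-line consequence of \Cref{item3}, and the only subtlety worth flagging is that \Cref{item3} is a deterministic (per-query) bound, which is what allows us to commute with the expectation without any additional argument. No measurability or integrability issue arises because $\zeta$ and $\sigma$ are integer-valued and the relevant set of spanning trees (or paths) on a fixed vertex set $P$ is finite.
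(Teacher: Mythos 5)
Your proposal is correct and matches the paper's intended argument: the paper states that the corollary follows from \Cref{item3} of \Cref{lemma:visitingstabbing}, and your pointwise-then-expectation argument, combined with the observation that a spanning path is a spanning tree, is exactly that derivation spelled out.
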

The following corollary follows directly from \Cref{lemma:visitingstabbing}, \Cref{item1,item2} and their proof.
\begin{corollary}
\label{cor:ubvisiting}
Let $\cD_Q$ be a distribution over query ranges and let $P$ be a set of $n$ points. Let $\Phi$ be a spanning tree of $P$. We can construct a binary partition tree $T$ of $P$ such that  
\[
\EE_{q\sim \cD_Q}[\zeta(T,q)] \leq (4\lceil \log n\rceil +2) \cdot 
\EE_{q\sim \cD_Q}[\sigma(\Phi,q)],
\]
in $O(n)$ running time. 
\end{corollary}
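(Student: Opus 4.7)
The plan is to simply chain together the two constructive bounds provided by \Cref{lemma:visitingstabbing}, since the corollary is essentially their composition. First, starting from the given spanning tree $\Phi$, I would invoke \Cref{item1} of the lemma to obtain a spanning path $\Pi$ on $P$ satisfying $\sigma(\Pi,q)\leq 2\sigma(\Phi,q)$ for every $q\in\cR$. Then, feeding $\Pi$ into \Cref{item2}, I would obtain a balanced binary partition tree $T$ such that $\zeta(T,q)\leq (2\lceil \log n\rceil +1)\sigma(\Pi,q)$ for every $q$.

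Composing these two pointwise inequalities yields, for every $q\in \cR$, the bound
\[
\zeta(T,q) \leq (2\lceil \log n\rceil +1)\cdot 2\sigma(\Phi,q) = (4\lceil \log n\rceil+2)\cdot \sigma(\Phi,q).
\]
Since this holds deterministically for each range $q$, taking expectation over $q\sim \cD_Q$ preserves the inequality by linearity and monotonicity, giving exactly the claimed bound on $\EE_{q\sim \cD_Q}[\zeta(T,q)]$. No additional probabilistic argument is needed, because \Cref{lemma:visitingstabbing} delivers worst-case guarantees that hold simultaneously over all ranges.

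For the running-time claim, I would appeal to the constructions underlying \Cref{lemma:visitingstabbing}: the spanning-tree-to-spanning-path conversion in \Cref{item1} is implemented by an Euler-tour-type traversal of $\Phi$ (each tree edge is traversed a constant number of times, accounting for the factor of $2$ in the stabbing bound), and the spanning-path-to-balanced-partition-tree conversion in \Cref{item2} recursively splits the path at its midpoint, which can be realized in $O(n)$ total time since the partition tree has $O(n)$ nodes and each split is done in time proportional to the size of the current subpath via a single linear scan. Summing over both stages gives total running time $O(n)$.

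The ``main obstacle'' here is essentially cosmetic: one must verify that the $O(n)$ bound for \Cref{item2} genuinely holds and that the pointwise inequalities stated in the lemma are preserved when moving to expectation. Both are immediate given the lemma's statement, so the proof reduces to a one-line composition plus a remark on implementation complexity.
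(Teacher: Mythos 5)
Your proof is correct and matches the paper's intent exactly: the paper states that the corollary ``follows directly from \Cref{lemma:visitingstabbing}, \Cref{item1,item2} and their proof,'' which is precisely the composition of the two pointwise bounds followed by taking expectations that you carry out. One small caveat on the running time: performing ``a single linear scan'' proportional to the subpath size at each of the $O(n)$ nodes would sum to $O(n\log n)$, so to genuinely get $O(n)$ one should build the complete binary tree over the fixed leaf order using index arithmetic (each node storing a range of positions in $\Pi$) rather than explicitly scanning and copying subsets.
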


 We apply \Cref{theo:etaepsilonapproxfunctionsrelative} with $\eta = 1/n$ to get an estimate, for each pair of points $(a,b) \in P\times P$ of the probability that $\{a,b\}$ is stabbed by a randomly selected query $q$. Light pairs that are stabbed with probability at most $1/n$ will be estimated within an additive error of $O(1/n)$ while the rest of them will be estimated with a constant multiplicative error. Since the expected stabbing number of any spanning tree is the sum of individual probabilities (one for each edge for the event it gets stabbed),  we obtain the following lemma. 
The full proof can be found in \Cref{app:spanningtreespreserved}. 
\begin{restatable}{lemma}{lemmaspanningtreespreserved}
\label{lemma:spanningtreespreserved}
Let $P$ be a set of $n$ points and let $\cD_Q$ be a query distribution having as support a countably infinite set $\cR$.  
Let $D$ be the dual VC dimension of the range space $(P,\cR)$.  
It suffices to sample $m \in O(n\cdot (D+\log(1/\delta)))$ queries $S_Q$ independently from $\cD_Q$, so that with probability at least $1-\delta$,   
for any spanning tree $\Phi$ of $P$, $m^{-1}\cdot  \sum_{q\in S_Q} \sigma(\Phi,q)$ is in 
$
 \left[\frac{5}{8}\cdot \EE_{q\sim\cD_Q} \left[\sigma(\Phi,q)\right] - \frac{3}{8} , \frac{11}{8}\cdot \EE_{q\sim\cD_Q} \left[\sigma(\Phi,q)\right] + \frac{3}{8} \right]$.
\end{restatable}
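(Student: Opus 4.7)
The plan is to reduce the uniform convergence statement over all spanning trees of $P$ to a uniform concentration bound on edge-stabbing probabilities, and then to apply Corollary~\ref{theo:etaepsilonapproxfunctionsrelative} to a carefully chosen function class indexed by pairs of points.

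First, I would index potential edges by unordered pairs and define, for every $\{a,b\}\in \binom{P}{2}$, the indicator $f_{a,b}:\cR\to\{0,1\}$ with $f_{a,b}(q)=1$ iff $q$ stabs $\{a,b\}$. Writing $p_{a,b}:=\EE_{q\sim\cD_Q}[f_{a,b}(q)]$, linearity of expectation yields, for every spanning tree $\Phi$ of $P$,
\[
\EE_{q\sim\cD_Q}[\sigma(\Phi,q)]=\sum_{e\in \Phi} p_e, \qquad \frac{1}{m}\sum_{q\in S_Q}\sigma(\Phi,q)=\sum_{e\in\Phi}\mu_{f_e}(S_Q).
\]
Thus it suffices to control $|\mu_{f_e}(S_Q)-p_e|$ uniformly over \emph{all} pairs $e\in\binom{P}{2}$, which collapses the continuum of tree-dependent events into a single uniform-convergence statement over the $O(n^2)$ indicators in $\cF:=\{f_{a,b}\}$.

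Next I would bound the VC dimension of the range space $(\cR,\cF)$. Each indicator $q\mapsto\mathbb{1}[a\in q]$ belongs to a class of VC dimension exactly $D$ (the dual VC dimension of $(P,\cR)$), and $f_{a,b}(q)=\mathbb{1}[a\in q]\oplus \mathbb{1}[b\in q]$ is a XOR of two such indicators. The standard Sauer--Shelah argument (squaring the shatter function and solving $2^m\le m^{2D}$) then gives $\mathrm{VCdim}(\cF)=O(D)$ up to an absorbed logarithmic factor. Applying Corollary~\ref{theo:etaepsilonapproxfunctionsrelative} to $\cF$ with $\eta=1/n$, a sample of size $m=O(n(D+\log(1/\delta)))$ ensures that, with probability at least $1-\delta$, for \emph{every} pair simultaneously,
\[
\lvert \mu_{f_{a,b}}(S_Q)-p_{a,b}\rvert \le \tfrac{3}{8}\max(p_{a,b},1/n).
\]

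To conclude, I fix an arbitrary spanning tree $\Phi$ of $P$ and sum over its $n-1$ edges; by the triangle inequality,
\[
\left\lvert \tfrac{1}{m}\sum_{q\in S_Q}\sigma(\Phi,q)-\EE_{q\sim\cD_Q}[\sigma(\Phi,q)]\right\rvert \le \tfrac{3}{8}\sum_{e\in\Phi}p_e+\tfrac{3}{8}\cdot\tfrac{n-1}{n}\le \tfrac{3}{8}\,\EE_{q\sim\cD_Q}[\sigma(\Phi,q)]+\tfrac{3}{8},
\]
and rearranging yields the claimed two-sided interval for every spanning tree at once. The main obstacle is the VC dimension step: the XOR construction naturally leads to an $O(D\log D)$ bound rather than $O(D)$, which is why the sample complexity is stated with a loose $O(\cdot)$; an alternative is to note $|\cF|=O(n^2)$, giving $\mathrm{VCdim}(\cF)=O(\log n)$, and take the smaller of the two, but the XOR route is self-contained and delivers the same asymptotic bound without case analysis.
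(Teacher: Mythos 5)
Your proposal is correct and follows essentially the same route as the paper's proof: decompose $\sigma(\Phi,\cdot)$ into per-edge indicators $f_{a,b}$, bound the VC dimension of this class by $O(D)$ via the XOR/boolean-combination of dual-range indicators, apply Corollary~\ref{theo:etaepsilonapproxfunctionsrelative} with $\eta=1/n$, and sum over the $n-1$ tree edges. Your single triangle-inequality step using $\max(p_e,1/n)\le p_e+1/n$ is a slightly more streamlined version of the paper's explicit light/heavy edge case split, but the argument is the same.
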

We now prove the main result of this section.
\begin{theorem}
\label{thm:samplingcomplexity}
    Let $P$ be a set of $n$ points and let $\cD_Q$ be a query distribution having as support a countably infinite set $\cR$. 
Let $D$ be the dual VC dimension of the range space $(P,\cR)$.  
It suffices to sample a set $S_Q$ of $m \in O(n\cdot (D+\log(1/\delta)))$ queries independently from $\cD_Q$, so that with probability at least $1-\delta$,   we can compute a tree $\tilde{T} \in \cT$ such that 
\[
\EE_{q\sim \cD_Q}\left[ \zeta(\tilde{T},q)\right] \in O\left(\log n \cdot \min_{T\in \cT} \EE_{q\sim \cD_Q}[ \zeta(T,q)]\right).
\]
Moreover, given $S_Q$, the running time to compute $\tilde{T}$ is in
$O(n^3 (D+\log(1/\delta))$. 
\end{theorem}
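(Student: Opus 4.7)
The plan is to combine the sampling lemma (\Cref{lemma:spanningtreespreserved}) with the two corollaries that link spanning-tree stabbing numbers to partition-tree visiting numbers, and to implement the optimization over spanning trees by a minimum spanning tree computation on a graph whose edge weights are empirical stabbing counts.

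First I would draw the sample $S_Q$ of size $m\in O(n(D+\log(1/\delta)))$, so that the conclusion of \Cref{lemma:spanningtreespreserved} holds with probability at least $1-\delta$: for every spanning tree $\Phi$ of $P$, the empirical quantity $\hat{\sigma}(\Phi):=m^{-1}\sum_{q\in S_Q}\sigma(\Phi,q)$ is sandwiched between $\tfrac{5}{8}\EE_{q\sim\cD_Q}[\sigma(\Phi,q)]-\tfrac{3}{8}$ and $\tfrac{11}{8}\EE_{q\sim\cD_Q}[\sigma(\Phi,q)]+\tfrac{3}{8}$. Observe that $\hat{\sigma}(\Phi)$ decomposes as $m^{-1}\sum_{e\in\Phi}w_e$, where $w_e$ is the number of queries in $S_Q$ stabbing the edge $e$. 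Hence minimizing $\hat{\sigma}$ over all spanning trees of $P$ is precisely the minimum spanning tree problem on the complete graph over $P$ with edge weights $w_e$. I would compute this MST $\tilde{\Phi}$ and then feed it into \Cref{cor:ubvisiting} to build the partition tree $\tilde{T}$.

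To verify the approximation guarantee, let $\Phi^{*}$ be the spanning tree minimizing $\EE_{q\sim\cD_Q}[\sigma(\Phi,q)]$ and $T^{*}$ the partition tree minimizing $\EE_{q\sim\cD_Q}[\zeta(T,q)]$. Chaining the sampling guarantee with the optimality of $\tilde{\Phi}$,
\[
\tfrac{5}{8}\EE_{q\sim\cD_Q}[\sigma(\tilde{\Phi},q)]-\tfrac{3}{8}\ \le\ \hat{\sigma}(\tilde{\Phi})\ \le\ \hat{\sigma}(\Phi^{*})\ \le\ \tfrac{11}{8}\EE_{q\sim\cD_Q}[\sigma(\Phi^{*},q)]+\tfrac{3}{8},
\]
so $\EE_{q\sim\cD_Q}[\sigma(\tilde{\Phi},q)]\le\tfrac{11}{5}\EE_{q\sim\cD_Q}[\sigma(\Phi^{*},q)]+\tfrac{6}{5}$. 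Combining with \Cref{cor:ubstabbing} gives $\EE_{q\sim\cD_Q}[\sigma(\tilde{\Phi},q)]\in O(\EE_{q\sim\cD_Q}[\zeta(T^{*},q)])$, using the harmless fact that $\EE_{q\sim\cD_Q}[\zeta(T^{*},q)]\ge 1$ to absorb the additive $\tfrac{6}{5}$. Applying \Cref{cor:ubvisiting} to $\tilde{\Phi}$ yields $\EE_{q\sim\cD_Q}[\zeta(\tilde{T},q)]\le(4\lceil\log n\rceil+2)\EE_{q\sim\cD_Q}[\sigma(\tilde{\Phi},q)]$, and composing the two inequalities delivers the claimed $O(\log n)$ approximation.

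For the running time, computing the weights $w_e$ takes $O(n^2\cdot m)=O(n^3(D+\log(1/\delta)))$: for each of the $\binom{n}{2}$ candidate edges and each of the $m$ sampled queries, we test in $O(1)$ time whether the query contains exactly one endpoint. An MST computation on the weighted complete graph then runs in $O(n^2)$ via Prim's algorithm, and \Cref{cor:ubvisiting} converts $\tilde{\Phi}$ to $\tilde{T}$ in $O(n)$ additional time, so the edge-weight computation dominates. The step I expect to require the most care is the quantitative chain of inequalities: the sampling lemma has the form of a relative-with-slack guarantee (with the $\pm\tfrac{3}{8}$ additive terms), so I must verify that the additive error is safely absorbed, which works precisely because $\min_{T}\EE[\zeta(T,q)]\ge 1$ is the root contribution of any partition tree. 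Everything else is a routine concatenation of the stated lemmas.
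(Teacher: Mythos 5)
Your proposal is correct and follows essentially the same route as the paper: invoke \Cref{lemma:spanningtreespreserved}, decompose the empirical stabbing number into edge weights, compute an MST, and chain \Cref{cor:ubstabbing} and \Cref{cor:ubvisiting}, with the same $O(n^2 m)$ runtime bound. Your explicit justification for absorbing the additive $\tfrac{3}{8}$ slack via $\min_{T}\EE_{q\sim\cD_Q}[\zeta(T,q)]\geq 1$ is a detail the paper's proof leaves implicit, and it is handled correctly.
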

\begin{proof}
    By \Cref{lemma:spanningtreespreserved}, with probability at least $1-\delta$,   
for any spanning tree $\Phi$ of $P$, $m^{-1}\cdot  \sum_{q\in S_Q} \sigma(\Phi,q)$ is in 
\[
  \left[\frac{5}{8}\cdot \EE_{q\sim\cD_Q} \left[\sigma(\Phi,q)\right] - \frac{1}{3} , \frac{11}{8}\cdot \EE_{q\sim\cD_Q} \left[\sigma(\Phi,q)\right] + \frac{1}{3} \right].
\]
Now, let $\tilde{\Phi}$ be the spanning tree that minimizes $m^{-1} \cdot \sum_{q\in S_Q} \sigma(\Phi,q)$ over all spanning trees $\Phi$. For any spanning tree $\Phi$ let $E(\Phi)$ be the set of its edges. Notice that for any spanning tree $\Phi$,  
$\sum_{q\in S_Q} \sigma(\Phi,q) = \sum_{q\in S_Q} \sum_{e\in E(\Phi)} \sigma(e,q)=  \sum_{e\in E(\Phi)} \sum_{q\in S_Q} \sigma(e,q)$. Hence, we can compute $\tilde{\Phi}$, by first constructing a complete weighted graph on $P$ where we assign weight $\sum_{q\in S_Q} \sigma(e,q)$ to each edge $e\in {{P} \choose {2}}$, and then computing the minimum spanning tree on that graph. The running time of this step is $O(n^2 m)$. 

Let $\Phi^{\ast}$ be the spanning tree that minimizes $\EE_{q\sim\cD_Q} [\sigma(\Phi,q)] $
over all spanning trees $\Phi$ of $P$. 
By \Cref{lemma:spanningtreespreserved} and \Cref{cor:ubstabbing}, $m^{-1} \sum_{q\in S_Q} \sigma(\tilde{\Phi},q)$ is in 
\[   O\left(\EE_{q\sim\cD_Q} [\sigma(\Phi^{\ast},q)]\right) \subseteq  O\left(\min_{T\in \cT} \EE_{q\sim \cD_Q}[\zeta(T,q)]\right). 
\]
Given $\tilde{\Phi}$, by \Cref{cor:ubvisiting} we can construct a binary partition tree $T$ of $P$ such that  
\begin{align*}
    \EE_{q\sim \cD_Q}[\zeta(T,q)] &\in  O\left(\log n \cdot 
\EE_{q\sim \cD_Q}[\sigma(\tilde{\Phi},q)]\right)\\
&\subseteq O\left(\log n \cdot 
\EE_{q\sim \cD_Q}[\sigma(\Phi^{\ast},q)] \right)\\
&\subseteq O\left(\log n \cdot \min_{T\in \cT} \EE_{q\sim \cD_Q}[\zeta(T,q)] \right)
\end{align*}  
in $O(n)$ running time. Hence, the overall running time is $O(n^2 m ) \subseteq O(n^3 (D+\log(1/\delta))$. 
\end{proof}

We remark that applying a uniform convergence result for real-valued functions (see e.g.~\cite{AB02}) directly on functions $\zeta$ or $\sigma$ yields a quadratic sampling complexity (as opposed to near-linear in \Cref{thm:samplingcomplexity}), since both functions evaluate to $[0,n]$.

\section{Separator Trees}
\label{sec4}
In this section, we focus on range-searching problems where ranges are metric balls of fixed radius. We define the \emph{Fixed-Radius-Ball (FRB) searching problem} as follows. Given a set $P$ of $n$ points in a metric $(M, d_M)$ and a search radius $r>0$, the problem consists of building a data structure, that given any query  $q\in M$, outputs $P\cap B_M(q,r)$.  

As a set of candidate solutions to the FRB problem, we consider separator trees, a special case of partition trees that allow for efficient traversing. 
A separator tree is a binary tree that divides the pointset into two disjoint subsets from the root downwards until the size of them becomes constant. The division is guided by a separator 
that is 
\begin{enumerate*}[label=\roman*)]
\item efficient, i.e., determining the side of the separator where a new point lies is time-efficient, 
\item balanced, i.e., both sides of the split contain a constant fraction of input points, and 
\item sparse with respect to the query distribution, i.e., the density ``near" the separator is low. 
\end{enumerate*}
A similar notion of small, balanced separators is studied by \cite{FM06}, although it works on graphs where separators are defined as sets of nodes. The problem of learning low-density hyperplane separators (although without a balance condition) is studied by \cite{pmlr-v5-ben-david09a}.

We restrict ourselves to ring-separators which is a natural class of efficient separators in the sense described above and support FRB queries. A ring-separator is defined by a metric ball and partitions the points into those lying in the interior of the ball and those lying in the exterior (and on the boundary) of the ball. 
\begin{definition}
    A ring separator $\varrho(p,r_s)$ is a ring $R(p,r_s,r_s+2r)$, where $r$ is the search radius of the FRB searching problem instance. This separator splits the pointset $P$ into 2 subsets, $P\cap B(p,r_s+r)\setminus \partial B(p,r_s+r) $ and $P \cap \overline{B(p,r_s+r)}$. We call these subsets inner and outer respectively. Moreover, $R(\varrho(p,r_s)):= R(p,r_s,r_s+r)$.
\end{definition}

Ring separators are chosen in a query-driven manner by seeking to minimize the density of the query distribution which is $r$-near to it. 
The intuition is that queries that are close to the separator contribute more to the query time because in that case, both parts of the split may contain output points and need to be checked. 


\subsection{Ring Trees in General Metrics}
\label{sec:genericds}
In this section, we design a data structure for the FRB searching problem in some arbitrary metric $(M,d_M)$. Our data structure is constructed as a binary tree as follows: starting with the whole pointset $P$ at the root, divide it in two according to a ring separator $\varrho$. Recursively repeat this on every new node, until the size of all the leaf sets is $O(1)$. We call this structure a \emph{ring tree}. 
The generic building algorithm can be seen in \Cref{btree} and requires a subroutine $findSeparator_M(P, S_Q)$ that depends on the metric and computes a separator with respect to a given pointset $P$ and a sample $S_Q$ of queries i.i.d.~sampled from a distribution $\cD_Q$. Ideally, $findSeparator_M(P, S_Q)$ returns separators that accommodate fast (expected) query search, e.g., they are sparse with respect to $\cD_Q$ so that the probability that both children of a node need to be checked is small. Using the separator, the building algorithm creates two children corresponding to the inner and outer subsets respectively (this is implemented by subroutine $split$ in \Cref{btree}). 

\setcounter{algorithm}{0}

\begin{algorithm}
\caption{Build Tree}\label{btree}
\begin{algorithmic}[1]
    \STATE {\bfseries Input:} \textit{pointset} $P$, \textit{query sample} $S_Q$ 
    \STATE N $\gets$ \textbf{new} Node
    \STATE $\text{N}.P = P$
    \IF {$|\text{N}.P|=O(1)$}
        \STATE \textbf{return} N
    \ENDIF
    \STATE $\text{N}.separator \gets findSeparator_M(\text{N}.P,S_Q)$
    \STATE $P.in, P.out \gets split(\text{N}.P,\text{N}.separator)$     \STATE $\text{N}.lchild \gets $PREPROCESS$(P.in,S_Q)$
    \STATE $\text{N}.rchild \gets $PREPROCESS$(P.out,S_Q)$
    \STATE \textbf{return} N
\end{algorithmic}
\end{algorithm}

For a query point $q \in M$, the query procedure starts from the root and checks where $q$ falls on the ring separator $R(o,r_s,r_s+2r)$. If $d_M(q,o)< r_s$, then it follows the child corresponding to the inner subset, else if $d_M(o,q)> r_s+2r$, then it follows the child corresponding to the outer subset. If the query falls in the ring (meaning $r_s\leq d_M(o,q)\leq r_s+2r$), then it proceeds to both children. Upon arriving at a leaf node, we enumerate all points and return those in $B(q,r)$. The complete pseudocode can be found in \Cref{qtree}.

\begin{algorithm}[H]
\caption{Query Search}\label{qtree}
\begin{algorithmic}[1]
\STATE {\bfseries Input:} \textit{query point} $q$, \textit{starting node} N
\IF {N is a leaf Node}
    \STATE $Sol\gets \{\}$
    \FOR{each $p\in$ N.$P$}
        \IF{$d_M(p,q)\leq r$}
            \STATE $Sol \gets Sol \cup \{p\}$
        \ENDIF
    \ENDFOR
    \STATE \textbf{return} Sol
\ENDIF
\STATE $\varrho(o,r_s) \gets$ N.$separator$
\IF {$d_M(q,o) < r_s$}
    \STATE \textbf{return} QUERY($q$, N$.lchild$)
\ELSIF {$d_M(q,o)>r_s+2r$}
    \STATE \textbf{return} QUERY($q$, N$.rchild$)
\ELSE
\STATE  \textbf{return} QUERY($q$,N$.lchild$)$\cup$QUERY($q$,N$.rchild$))
\ENDIF
\end{algorithmic}
\end{algorithm}
It is easy to verify that \Cref{qtree} will always return a correct answer. In fact, given a node with separator $\rho = R(o,r_s,r_s+2r)$, $B(q,r)$ may contain points from both children's pointsets, only if $q\in \rho$, and in such cases, the query procedure searches both children for points.
\subsection{Sparse Distributions in $\ell_2$}
In this section, we restrict ourselves to the Euclidean metric, and we 
study family of distributions of the query points $\cD_Q$ for which there exist separators that are both balanced with respect to the input pointset and sparse with respect to $\cD_Q$. The existence of such separators implies the existence of ring trees with bounded query time.
\begin{definition}
        Let $f$ be the probability density function (PDF) of a distribution $\cD_Q$ over $\RR^d$. We say that $\cD_Q$ is $(\alpha, r)$-sparse if for any Euclidean ball $B$ of radius $r$, it holds $\int_{B} f(x) ~\d x \leq \alpha$.   
\end{definition}
Intuitively, sparse distributions generalize the uniform distribution. 
We show that for any set of points $P$ and any sparse query distribution $\cD_Q$, there is a ring separator that is both balanced with respect to $P$ and sparse with respect to $\cD_Q$.



\begin{lemma}
\label{lemma:separatorexistence}
     Let  $\mathcal{D}_Q$ be an $(\alpha, r)$-sparse distribution and let $f$ be its PDF. In addition, let $P$ be a pointset in $\RR^d$, with $|P|=n$. Lastly, let $c$ be the doubling constant of $\RR^d$. There is a point $p\in\RR^d$ and a radius $r_s$ defining a ring separator $\varrho(p,r_s)$ such that for the ring $R=\{x\in\RR^d \mid \|x-p\|_2 \in [r_s,r_s+2r] \}$ is true that:
\begin{enumerate}
    \item $|\{x\in P \mid \|x-p\|\leq r_s\}|\geq \frac{n}{c+1}$ and $|\{x\in P \mid \|x-p\|\geq r_s+2r\}|\geq \frac{n}{c+1}$,
    \item  $\int_{R} f(x) ~\d x \leq 2{\alpha}^{O(1/d)}$.
\end{enumerate}
\end{lemma}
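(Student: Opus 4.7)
I would split the proof into two conceptual parts: first, establish the existence of a centre $p$ together with a contiguous interval of ``valid'' radii where the balance condition~(1) holds; second, use an averaging/pigeonhole argument inside that interval to isolate a specific $r_s$ for which the ring has small $f$-mass.

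For the balance part, I would take $p$ to be the centre of a smallest-radius ball $B(p,R^{\ast})$ containing at least $n/(c+1)$ points of $P$ (such a ball exists since the point count is integer-valued) and define the interval of valid $r_s$ as $[R^{\ast},\overline{r}]$, with $\overline{r}=\sup\{s:|B(p,s+2r)\cap P|\le cn/(c+1)\}$. The two inequalities in~(1) then hold for every $r_s$ in this interval by construction. The doubling property of $\RR^d$ is what guarantees that the interval is non-empty: $B(p,2R^{\ast})$ is covered by $c$ balls of radius $R^{\ast}$, and minimality of $R^{\ast}$ (together with an infinitesimal shrink of each covering ball) implies that each of them contains fewer than $n/(c+1)$ input points, so $|B(p,2R^{\ast})\cap P|<cn/(c+1)$ and thus $\overline{r}\ge 2R^{\ast}-2r$. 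If this does not already exceed $R^{\ast}$, the same argument can be rerun at a coarser scale (choosing $p$ as the centre of a ball realising the count $n/2$, say) so that the doubling step supplies the needed slack.

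For the sparsity part, I would subdivide $[R^{\ast},\overline{r}]$ into $\lfloor(\overline{r}-R^{\ast})/(2r)\rfloor$ disjoint length-$2r$ sub-intervals. The corresponding rings $R(p,r_s,r_s+2r)$ are pairwise disjoint subsets of $\RR^d$, so their $f$-masses sum to at most $1$, and by pigeonhole some ring has mass at most $2r/(\overline{r}-R^{\ast})$. Independently, using the $(\alpha,r)$-sparsity of $\cD_Q$, any ring $R(p,r_s,r_s+2r)$ can be covered by $O((1+r_s/r)^{d-1})$ balls of radius $r$, each of $f$-mass at most $\alpha$, giving a second upper bound of $O(\alpha\,(1+r_s/r)^{d-1})$ on the ring mass. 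Taking the minimum of these two bounds and choosing the threshold $r_s$ of order $r\alpha^{-1/d}$ balances the two expressions and yields the claimed $2\alpha^{O(1/d)}$ estimate, where the exponent $O(1/d)$ emerges precisely from this balancing (with constants absorbing a factor depending on the doubling dimension).

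The main obstacle I anticipate is the first part: cleanly combining the minimality of $R^{\ast}$ with the doubling constant so as to produce a valid interval of positive length in every regime of $r$ relative to the natural scale of $P$, particularly when $r$ is comparable to $R^{\ast}$ and the straightforward ``$2R^{\ast}-2r$'' bound degenerates. Once a non-trivial interval is in hand, the pigeonhole/covering computation in the second part is routine.
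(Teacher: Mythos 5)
Your proposal follows the paper's strategy in all essentials. The balance part is identical: take $B_1=B(p,R^{\ast})$ to be a smallest ball containing $n/(c+1)$ points of $P$ and use the doubling constant to bound the point count of the concentric double ball by $cn/(c+1)$. The sparsity part uses the same two ingredients as the paper — a pigeonhole over pairwise-disjoint width-$2r$ annuli whose masses sum to at most $1$, and a covering of an annulus by radius-$r$ balls each of mass at most $\alpha$ — merely combined in a different order. The paper applies the covering bound to the whole shell $B_2\setminus B_1$ to deduce, in the non-trivial case, that $r\le R^{\ast}\alpha^{\Theta(1/d)}$, and then pigeonholes over the $\Omega(\alpha^{-\Theta(1/d)})$ disjoint annuli that consequently fit in the shell; you instead pigeonhole over whatever interval of balanced radii is available and fall back on the per-annulus covering bound when that interval is short. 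Your case split on $R^{\ast}/r$ versus $\alpha^{-\Theta(1/d)}$ is the mirror image of the paper's ``trivial case'' and closes the same way.

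The one concrete problem is your fix for the degenerate regime $2r>R^{\ast}$, where the doubling argument alone does not produce a non-empty interval $[R^{\ast},\overline{r}]$. Re-centering on a smallest ball containing $n/2$ points does not help: covering its double by $c$ congruent balls bounds the double's point count only by $cn/2\ge n$, so the outer-balance guarantee disappears entirely (and minimality of the radius never upper-bounds the count of \emph{other} balls of the same radius, only of strictly smaller ones). Note, moreover, that in this regime the lemma can simply fail — if $\mathrm{diam}(P)<2r$, no ring of width $2r$ can have $n/(c+1)$ points strictly inside and $n/(c+1)$ points strictly outside — so no choice of centre rescues it; the statement carries an implicit assumption that $r$ is small relative to the scale of $P$. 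The paper quietly sidesteps this: in its non-trivial branch the sparsity hypothesis forces $r\le R^{\ast}\alpha^{\Theta(1/d)}\ll R^{\ast}$, and the remaining case is declared ``trivially true.'' You should either adopt that framing (absorb the degenerate regime into the branch where the conclusion holds for free, or make the assumption $2r\le R^{\ast}$ explicit) rather than claim a rescaling argument that does not go through. Apart from this, your write-up is a faithful reconstruction of the paper's proof.
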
   

\begin{proof}
     Let ${B}_1$ be the Euclidean ball of minimum radius such that $|{B}_1\cap  P| = \frac{n}{c+1}$ and let ${B}_2$ be the concentric ball of twice the radius. Separator $\varrho$ will be somewhere in between. By the definition of ${B}_1$ and the doubling constant we have $|P\setminus {B}_2| \geq  \frac{n}{c+1}$. This is because ${B}_1$ (being the ball with the minimum radius containing $\frac{n}{c+1}$ points) is also the most dense out of the balls with the same radius. Thus, ${B}_2$, which is covered by at most $c$ balls of the same radius as $B_1$,  will contain at most $\frac{cn}{c+1}$ points.
     
     For point 2, we first scale the space so that the radius of the ball ${B}_1$ is equal to 1 (and the radius of ${B}_2$ is equal to 2). Due to the scaling, $r$ changes to $r'$. The separator will be concentric to the balls and will have an inner radius so that it is entirely in ${B}_2 \setminus {B}_1$. Thus, we can assume that $\int_{{B}_2 \setminus {B}_1} f(x) ~\d x > \alpha^{O(1/\log c)}$, otherwise the statement is trivially true. By the definition of the doubling constant, we can cover ${B}_2$ with $c^{\Theta(\log(1/r'))}$ balls of radius $r'$. Hence, 
     \[
        \alpha^{O(1/\log c)} \leq \int_{{B}_2\setminus {B}_1} f(x) ~\d x\leq c^{\Theta(\log(1/r'))} \alpha ,
    \]
    which implies
    $r' \leq  {\alpha }^{\Theta(1/\log c)}$.
     This equation implies that there are at least $\frac{{\alpha}^{-\Theta(1/d)}}{2}$ annulii of width $2r'$ in ${B}_2 \setminus {B}_1$, having overall density at most $1$. Hence, one of those annulii has density at most $2{\alpha}^{O(1/d)}$.
\end{proof}

The proof is inspired by a proof for ball separators in $k$-ply ball systems obtained in \cite{sepH11}. The argument about the balance is the same. Here, instead of the $k$-ply property, we have the $(\alpha,r)$-sparsity property. 

\subsection{Computing a ring separator in $\ell_2$}
\label{ssec:ringcomputation}
In this section, we focus on the problem of computing ring separators in Euclidean metrics, a solution to which can be plugged into the algorithm of \Cref{sec:genericds} (as subroutine $findSeparator_{\ell_2}(P,S_Q)$). 
While \Cref{lemma:separatorexistence} guarantees the existence of good separators for sparse query distributions $\cD_Q$, to actually compute a ring separator, we need a set of queries $S_Q$ independently sampled from $\cD_Q$. Since the VC dimension of the space of rings is bounded, the sample size required to get a good estimation of the density in the ring is also bounded, as formalized in the following lemma.
\begin{lemma}\label{lemma:ringsamplesize}
   Let $Q\subset \RR^d$ be a countably infinite set of points and let $\cD_Q$ be any distribution on them. 
        Also, let $ \delta, \eta \in (0,1)$. There is an $m_0 \in O\left(\frac{1}{\eta } \cdot \left( d \log \left(\frac{1}{\eta}\right) + \log\left(\frac{1}{\delta}\right) \right)\right)$ such that for any $m\geq m_0$, if we sample $x_1,\ldots,x_m$ from $Q$ independently at random, with probability at least $1-\delta$, for any annulus $R$, $\left\lvert  m^{-1}{\sum_{i=1}^m \mathbbm{1}\{x_i  \in R\}}
       -\Pr_{q\sim\cD_Q}[q \in R]\right\rvert $ is at most
        $ (3/8) \cdot\max\{\Pr_{q\sim\cD_Q}[q \in R] , \eta\}$.
        
\end{lemma}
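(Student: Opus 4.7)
The plan is to derive this lemma as a direct specialization of \Cref{theo:etaepsilonapproxfunctionsrelative}. First I would set the domain to be the countably infinite set $Q$ (matching the hypothesis of the corollary) and define the function class $\cF$ of indicators of annuli: for every ring $R = R_{\ell_2^d}(p,r_1,r_2)$ with $p \in \RR^d$ and $0 \leq r_1 \leq r_2$, put $f_R(x) = \mathbbm{1}\{x \in R\}$, regarded as a function $Q \to \{0,1\}$. Under this translation, $\mu_{f_R}(\vec{x}) = m^{-1}\sum_{i=1}^m \mathbbm{1}\{x_i \in R\}$ and $\EE_{q\sim\cD_Q}[f_R(q)] = \Pr_{q\sim\cD_Q}[q \in R]$, so the conclusion of the corollary, applied uniformly over $\cF$, is exactly the conclusion of the lemma. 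The only nontrivial content is therefore a bound on $VCdim(\cF)$.

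Next I would bound the VC dimension of $\cF$ by $O(d)$. The key observation is that $x \in R(p,r_1,r_2)$ is equivalent to the conjunction $r_1^2 \leq \|x\|^2 - 2\langle p, x\rangle + \|p\|^2 \leq r_2^2$. Under the fixed lifting $\phi(x) = (x, \|x\|^2) \in \RR^{d+1}$, each of these two inequalities is affine in $\phi(x)$, with coefficients that depend only on the separator parameters $(p, r_1, r_2)$. Hence every annulus is the $\phi$-preimage of an intersection of two halfspaces in $\RR^{d+1}$. Since halfspaces in $\RR^{d+1}$ have VC dimension $d+2$, a standard boolean-combination bound (or a direct parameter-counting argument using that $\cF$ has $d+2$ real parameters) gives $VCdim(\cF) \in O(d)$.

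Finally I would plug $D = O(d)$ into the sample-size statement of \Cref{theo:etaepsilonapproxfunctionsrelative}; the required $m$ then becomes $m \geq (c/\eta)\bigl(d\log(1/\eta) + \log(1/\delta)\bigr)$ for a universal constant $c$, which is exactly the stated $m_0$. The uniform high-probability bound from the corollary then reads
\[
\bigl|\mu_{f_R}(\vec{x}) - \EE_{q\sim\cD_Q}[f_R(q)]\bigr| \leq (3/8)\cdot \max\{\EE_{q\sim\cD_Q}[f_R(q)],\eta\}
\]
simultaneously over all $R$, which unfolds to the claimed inequality. I do not expect a genuine obstacle here: everything reduces to the VC-dimension calculation, and the lifting trick above makes that step essentially free; the only thing to double-check is that the hypotheses of the corollary (countably infinite domain, fixed $\eta, \delta \in (0,1)$) are preserved verbatim, which they are.
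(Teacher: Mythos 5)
Your proposal is correct and follows essentially the same route as the paper: both invoke \Cref{theo:etaepsilonapproxfunctionsrelative} on the class of annulus membership predicates and reduce everything to showing that this class has VC dimension $O(d)$ (the paper notes that an annulus is a difference of two balls; you give the equivalent lifting-to-halfspaces argument in more detail). No gaps.
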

\begin{proof}
    We invoke \Cref{theo:etaepsilonapproxfunctionsrelative} for the set of annulus membership predicates. Each ring is a set-theoretic difference of two Euclidean balls. Hence, its VC dimension is $O(d)$. 
\end{proof}

Given a set of points $P\subset \RR^d$ and a sufficiently large query sample $S_Q$, we can now compute a ring separator. We present a brute-force algorithm that enumerates all combinatorially distinct separators with respect to $P, S_Q$ (a more efficient heuristic is discussed in \Cref{section:experiments}). Each separator $\varrho(x,r_s)$ is uniquely defined by its center $x$ and inner radius $r_s$; hence the space of valid solutions $\SS_S$ is equivalent to  $\RR^d\times\RR_{+}$. We now define a set of polynomials whose zero sets partition $\SS_S$ into open connected cells, such that each cell represents a set of combinatorially equivalent separators, i.e., all separators defined by vectors in the same cell have the same points of $P\cup S_Q$ in their inner, middle, and outer parts. Let $P\cup S_Q=\{p_1,p_2,\ldots,\}$.
For every $i \in [|P\cup S_Q|]$, we define:
    \begin{align*}
        h_{in}^i(x,r_s) &:= \|p_i-x\|_2^2-r_s^2,\\
        h_{mid}^i(x,r_s) &:= \|p_i-x\|_2^2-(r_s+r)^2,\\
        h_{out}^i(x,r_s) &:=\|p_i-x\|_2^2-(r_s+2r)^2.
    \end{align*}
The main observation is that for all points in a cell, the sign of each of the above polynomials remains the same; hence all $x,r_s$ in the same cell define separators realizing the same $P\cup S_Q \cap B(x,r_s)$, $P\cup S_Q \cap B(x,r_s+r)$,  $P\cup S_Q \cap B(x,r_s+2r)$. 
Using known techniques to sample arbitrary points from each cell of the arrangement defined by the above hypersurfaces, we obtain the following theorem.
\begin{theorem}\label{theorem:computingringseparator}
There is an algorithm that given as input a set $P$ of $n$ points in $\RR^d$, and a sample $S_Q$ of $O\left(\frac{1}{\eta } \cdot \left( d \log \left(\frac{1}{\eta}\right) + \log\left(\frac{1}{\delta}\right) \right)\right)$ points 
independently chosen 
from an $(\alpha,r)$-sparse query distribution $\cD_Q$ satisfying the guarantees of \Cref{lemma:ringsamplesize},  computes a ring separator $\varrho(p,r_s)$ in  $O(n)^{d+2} \cdot 2^{O(d)}$ time that satisfies:
\begin{enumerate}
\item $|\{x\in P \mid \|x-p\|\leq r_s\}|\geq \frac{n}{c+1}$ and $|\{x\in P \mid \|x-p\|\geq r_s+2r\}|\geq \frac{n}{c+1}$,
    \item  $\int_{R(\varrho(p,r_s))} f(x) ~\d x \leq O(1)\cdot{\alpha}^{O(1/d)}+O(\eta)$,
\end{enumerate}
where $f$ is the density of $\cD_Q$. 
\end{theorem}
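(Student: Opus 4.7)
The plan is to enumerate a representative separator from every combinatorial class induced by $P \cup S_Q$, filter by the balance condition, and output the one with the smallest empirical ring mass on $S_Q$. Existence of a good candidate in this enumeration will come from \Cref{lemma:separatorexistence}, and the transfer of empirical sparsity to distributional sparsity will come from \Cref{lemma:ringsamplesize}.

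More concretely, the space of separators is parameterized by $(x, r_s) \in \RR^d \times \RR_{+}$, and for each $p_i \in P \cup S_Q$ the polynomials $h_{in}^i, h_{mid}^i, h_{out}^i$ are quadratic in these $d+1$ variables. Within any full-dimensional cell of the arrangement formed by the zero sets of these $3|P\cup S_Q| = O(n)$ polynomials, the sign pattern is constant, so all $(x, r_s)$ in the cell realize the same partition of $P \cup S_Q$ into inner, ring, and outer subsets. Using standard real-algebraic-geometry algorithms that output one sample point in every nonempty cell of an arrangement of $N$ polynomials of constant degree in $\RR^{d+1}$ (see e.g.\ cylindrical algebraic decomposition / the sampling algorithms of Basu--Pollack--Roy), I can enumerate one representative per cell in time $O(n)^{d+2} \cdot 2^{O(d)}$, matching the claimed bound.

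For each representative $(x, r_s)$ I would compute, in $O(nd)$ time, the counts $n_{in} := |P \cap B(x, r_s)|$, $n_{out} := |P \setminus B(x, r_s+2r)|$, and the empirical ring mass $\mu(x,r_s) := |S_Q \cap R(x, r_s, r_s+2r)| / |S_Q|$. I retain only those representatives satisfying $n_{in}, n_{out} \geq n/(c+1)$ and return the one minimizing $\mu$. For correctness, \Cref{lemma:separatorexistence} supplies a separator $\varrho^\ast(p^\ast, r_s^\ast)$ whose cell contains a representative inheriting both the balance condition and the empirical ring mass (since the combinatorial partition of $P\cup S_Q$ is preserved on the cell); by \Cref{lemma:ringsamplesize}, the true ring mass of $\varrho^\ast$ and its empirical mass differ by at most $(3/8)\max\{\Pr[\cdot],\eta\}$, so the empirical mass is $O(\alpha^{O(1/d)}) + O(\eta)$. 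Our output has empirical mass no larger, and applying \Cref{lemma:ringsamplesize} in the other direction converts this back to a bound on the true density, yielding item 2; item 1 holds by construction.

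The main obstacle is purely the algebraic-geometric machinery: justifying that one can enumerate cells of the arrangement of these $O(n)$ degree-$2$ polynomials in the claimed $O(n)^{d+2}\cdot 2^{O(d)}$ time, and arguing that the sampled representative in the cell of $\varrho^\ast$ inherits the balance and combinatorial ring membership of $\varrho^\ast$ itself. Both are standard but technical; once they are in place, the probabilistic and existential ingredients assemble cleanly into the stated guarantees.
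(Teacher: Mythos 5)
Your proposal is correct and follows essentially the same route as the paper: enumerate one representative per cell of the arrangement of the $h_{in}^i, h_{mid}^i, h_{out}^i$ zero sets via the Basu--Pollack--Roy sampling algorithm, filter by balance, minimize empirical ring mass over $S_Q$, and combine \Cref{lemma:separatorexistence} with the two-sided estimate of \Cref{lemma:ringsamplesize}. Your write-up is in fact slightly more explicit than the paper's about the two directions in which \Cref{lemma:ringsamplesize} is applied (bounding the empirical mass of the existential separator, then converting the output's empirical mass back to true density), which the paper leaves implicit.
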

\begin{proof}
Let $\mathcal{A}$ be the arrangement containing the zero sets of all polynomials $h_{in}^i(x,r_s)$, $h_{mid}^i(x,r_s)$, $h_{out}^i(x,r_s)$. Using the result of \cite{BASU199728}, we can compute a set $\mathcal{C}$ of points, in $O(n)^{d+2} \cdot 2^{O(d)}$
time, that contains at least one point in each cell of
$\mathcal{A}$. By definition of $h_{in}^i(x,r_s)$, $h_{mid}^i(x,r_s)$, $h_{out}^i(x,r_s)$, any two points in the same cell correspond to two combinatorially equivalent separators: both have the same points in their inner parts, the same points in their outer parts, and the same points in the two rings. Moreover,  there are no points of $P\cup S_Q$ on any of the three concentric spheres of radii $r_s, r_s+r, r_s+2r$. We simply return the ring separator $\varrho(x,r_s)$, 
that minimizes $|R(\varrho(x,r_s))\cap S_Q|$ among all separators in $\mathcal{C}$ 
whose both inner and outer parts contain at least $\frac{n}{c+1}$ points from $P$. The claim then follows from \Cref{lemma:separatorexistence,lemma:ringsamplesize}. 
\end{proof}


\subsection{Ring Trees in $\ell_2$}
In this section, we design ring trees for the case of sparse query distributions in $\ell_2$. In this case, we are able to bound the time and space complexities for our data structures. 
\begin{theorem}\label{T424}
    Let $\cD_Q$ be an $(\alpha, r)$-sparse query distribution and let $S_Q$ be a set of $ O\left(n \cdot \left( d \log n + \log\left(\frac{1}{\delta}\right) \right)\right)$ query points independently sampled from $\cD_Q$. With probability $1-\delta$ over the choice of $S_Q$ the following holds. 
    For any set $P$ of $n$ points in $\RR^d$, we can build a ring tree that solves the FRB searching problem in the Euclidean space and comes along with the following guarantees:
    \begin{enumerate}
        \item The expected query time is in $O(n\alpha^{O(1/d)}c\log n)$
        \item The space complexity is in $O(dn)$
        \item The preprocessing time is $O(\log n)\cdot O(n)^{d+2} \cdot 2^{O(d)}$.
    \end{enumerate}
\end{theorem}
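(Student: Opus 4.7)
The plan is to instantiate Algorithm~\ref{btree} using the subroutine from \Cref{theorem:computingringseparator} as $findSeparator_{\ell_2}$, applied with $\eta = 1/n$ at every recursive call. A single shared sample $S_Q$ will suffice: \Cref{lemma:ringsamplesize} gives uniform convergence over \emph{all} annuli in $\RR^d$ simultaneously, so the density estimates needed at every node are controlled by one sample, bypassing any union bound over the $O(n)$ calls to $findSeparator_{\ell_2}$. Plugging $\eta = 1/n$ and failure probability $\delta$ into \Cref{lemma:ringsamplesize} yields exactly the claimed sample size $O(n(d\log n + \log(1/\delta)))$, and on this high-probability event each node's computed separator enjoys the two guarantees of \Cref{theorem:computingringseparator}.

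Next I would verify the structural bounds. By the balance guarantee, every split keeps at least a $1/(c+1)$ fraction of the points on each side, so the tree has depth $O(c\log n)$ and $O(n)$ nodes in total; storing a center in $\RR^d$ and an inner radius at each node gives the space bound $O(dn)$. For the preprocessing time, note that nodes at the same level partition $P$, and the running time of \Cref{theorem:computingringseparator} is $O(|P_v|)^{d+2}\cdot 2^{O(d)}$; since $n \mapsto n^{d+2}$ is convex, the per-level cost is dominated by the root's $O(n)^{d+2}\cdot 2^{O(d)}$, and summing over $O(\log n)$ levels gives the stated preprocessing bound.

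The main work is the query-time analysis. At every visited internal node $v$, \Cref{qtree} descends into both children iff the query center falls in the ring $R(\varrho_v)$; by the second guarantee of \Cref{theorem:computingringseparator} with $\eta = 1/n$, the probability of this event (for $q \sim \cD_Q$) is at most $\beta := O(\alpha^{O(1/d)}) + O(1/n)$. Letting $N_k$ denote the number of visited nodes at level $k$ and using linearity of expectation on the per-node stabbing indicators, one obtains the recurrence $\EE[N_{k+1}] \leq (1+\beta)\,\EE[N_k]$, hence $\EE[N_k]\leq (1+\beta)^k$. Summing across the $O(c\log n)$ levels and using $(1+\beta)^{c\log n}\leq e^{\beta c\log n}$, together with the $O(1)$ work per internal node and per (constant-size) leaf, delivers the expected query time bound of \Cref{T424}.

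The conceptually hardest point is this last step: justifying that the per-node stabbing bound $\beta$ can be plugged into the recurrence as an unconditional factor, even though ``$v$ is visited'' and ``$q$ stabs $\varrho_v$'' are both functions of the same random query. This will be handled by writing $\EE[N_{k+1}] = \sum_v \Pr[v\text{ visited}] + \sum_v \Pr[v\text{ visited and }q\text{ stabs }\varrho_v]$ and dropping the conditioning in the second sum via $\Pr[\cdot \wedge \cdot] \leq \Pr[q\text{ stabs }\varrho_v] \leq \beta$, using that the bound of \Cref{theorem:computingringseparator} holds with respect to the full distribution $\cD_Q$. A secondary subtlety is ensuring that the single-sample guarantee of \Cref{lemma:ringsamplesize} is strong enough to certify, at every node, both that the separator returned by \Cref{theorem:computingringseparator} is near-optimally sparse and that its empirical density accurately reflects its $\cD_Q$-density; this again reduces to uniform convergence over annuli, so no new argument is required.
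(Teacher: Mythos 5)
Your proposal follows essentially the same route as the paper: instantiate the generic build algorithm with the separator of \Cref{theorem:computingringseparator} at $\eta=1/n$, use the single-sample uniform convergence over annuli from \Cref{lemma:ringsamplesize} to avoid a union bound over nodes, get depth $O(c\log n)$ from the balance guarantee, bound the per-level preprocessing cost by superadditivity of $x\mapsto x^{d+2}$, and control the query time by summing the \emph{unconditional} stabbing probabilities $\Pr_{q\sim\cD_Q}[q\in R(\varrho_v)]\leq\beta$ over all $O(n)$ separators and multiplying by the height — which is exactly the paper's ``expected number of bad nodes'' argument. One caveat: the intermediate recurrence $\EE[N_{k+1}]\leq(1+\beta)\,\EE[N_k]$ does \emph{not} follow from the drop-the-conditioning step you describe (that would require $\Pr[q\in R(\varrho_v)\mid v\text{ visited}]\leq\beta$, which the sparsity guarantee, stated for the full distribution $\cD_Q$, does not give); what does follow is the additive bound $\EE[N_{k+1}]\leq\EE[N_k]+\beta\,n_k$ with $n_k$ the total number of level-$k$ nodes, and summing that over levels already yields $O\bigl((1+n\beta)\,c\log n\bigr)=O(n\alpha^{O(1/d)}c\log n)$ as claimed. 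So the argument you actually justify is the correct one; just discard the multiplicative recurrence and the $e^{\beta c\log n}$ step.
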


\begin{proof}
    By \Cref{theorem:computingringseparator}, with probability at least $1-\delta$ over the choice of $S_Q$, we can compute, for any subset $S$ of $P$, ring separators $\varrho(p,r_s)$ that satisfy:
$|\{x\in S \mid \|x-p\|\leq r_s\}|\geq \frac{|S|}{c+1}$, $|\{x\in S \mid \|x-p\|\geq r_s+2r\}|\geq \frac{|S|}{c+1}$, and 
    \[\Pr_{q\sim \cD_Q}[q\in R(\varrho(p,r_s))] \leq O(1)\cdot{\alpha}^{O(1/d)}+O(1/n).\]

 The running time of each such computation is in $O(|S|)^{d+2} \cdot 2^{O(d)}$. Let $b$ be the number of nodes in the tree that the query follows both of their children. We know that the probability that a query follows both children is equal to the probability that the query falls on the ring separator. This probability is upper bounded by  $O(1)\cdot \alpha^{O(1/d)}+O(1/n)$.
    
    Thus, the expected number of bad nodes is $\mathbb{E}[b]=O(n)\cdot 2\alpha^{O(1/d)} + O(1)$, because we have at most $n$ separators in the tree. Finally, the expected query time is less than the product of the leaves visited (which is equal to the expected number of bad nodes plus $1$) and the height of the tree:    
    \[\mathbb{E}[T_q] \leq (\mathbb{E}[b]+1)\cdot h=(O(n)\cdot 2\alpha^{O(1/d)}+O(1))\cdot h.\]
    We have that $h=\log_{\frac{c+1}{c}}{n}=\frac{\log n}{\log{\frac{c+1}{c}}}=O(c\log n)$. We have $\frac{c+1}{c}$ as the base of the logarithm, as the bigger in size child has at most $\frac{c}{c+1}$ portion of its parent points. 
    This concludes the proof of the first point.
    
    For the second point, we simply observe that each point is stored in only one leaf. Consequently, the required space includes the space needed to store the points and the space needed to store at most $n$ inner vertices of the model along with their separators. Thus the space needed is $O(dn)$.

    Lastly for the third point, during preprocessing we need to find $O(n)$ ring separators, as we have $O(n)$ inner vertices. 
    Each layer of the tree realizes a partition of $P$ into subsets $S_1,\ldots S_{\ell}$, and computing all ring separators in this layer costs $2^{O(d)} \cdot \sum_{i=1}^{\ell} O(|S_i|)^{d+2} =  O(n)^{d+2} \cdot 2^{O(d)}$. 
    Hence, the total time required to find all separators is $O(h) \cdot O(n)^{d+2} \cdot 2^{O(d)}$, where $h=O(c\log n)$ as shown before. 
    Also, after finding each separator, we need to split the pointset on that vertex. Overall, the time needed for all splits is upper bounded by $O(nh) \subseteq O(n\cdot c\log n)$. The total preprocessing time is the sum of the times of those two operations (recall that $c=2^{O(d)})$.
\end{proof}

\section{Experiments} 
\label{section:experiments}
In this section, we discuss our experimental results\footnote{Our code can be found at \url{https://github.com/anonymous-752/Query_Driven_Range_Searching}.}. For the first part, we show that the use of shallow NNs is indeed beneficial as they provide fast inference time, with only a minor effect on accuracy. Next, we draw sample sets $S_Q$ and observe how the expected query times for the respective trees $\tilde{T}$ decrease as the sample set size increases. For the second part, we develop a heuristic for finding a good separator and use it to build trees, again on different-sized sample sets. The experiments were performed on an AMD Ryzen Threadripper 3960X 24-Core Processor CPU @ 4.5 GHz, using the Eigen library \cite{eigenweb} and the $\mathsf{-march}=\mathsf{native}$ and $\mathsf{-O3}$ flags for improved runtime. 

\subsection{Pointset and Query Distribution}

For the pointset, we use the MNIST \cite{lecun2010mnist} dataset, on which we reduce its dimension by applying the Johnson-Lindenstrauss transform \cite{DasguptaG03} and normalize it so it has zero mean and unit variance. We end up with $60000$ points in $15$ dimensions. 
As queries, we use ball ranges. For the partition trees the centers' coordinates are generated from the $N(0,1)$ distribution, and the radii are generated from the folded normal distribution with mean $0$ and variance $16$. For ring trees, we create a query set by randomly generating two queries near each original point, ensuring the queries remain close to the original point set and fix the search radius at $1.5$.

\subsection{Neural Network Accuracy}
In the inner nodes of partition trees, the decision of whether a query range stabs the pointset is made by a trained NN. More precisely, the NN predicts whether the query range contains the whole pointset, stabs it, or does not contain any points from it. To train the model, we generate ranges and their corresponding labels, and feed them only once to the network, until its weights converge. This way, we avoid overfitting. We also note that training a model for the whole pointset ($60000$ points) takes around $1$ minute.

While the model can make mistakes in various cases, the ones where a wrong prediction causes a wrong answer are when a range contains the whole pointset or stabs it, but the NN predicts that the range is empty. This is because in these cases, points that do belong in the range are missed. Thus, we define ``context-aware accuracy" as the proportion of instances in which the NN's predictions are either correct or result in errors that do not lead to a loss of points. In \Cref{tab:nn_acc} we witness the actual accuracy as well as the context-aware accuracy of 2 NNs (having 16-8 and 100-20 neurons in two hidden layers) and different-sized pointsets. We notice that the accuracy does not change significantly for larger networks, thus we stick to the smaller ones so that we have faster inference times, and consequently faster query times.
\begin{table}[ht]
    \centering
    \vskip -0.1in
    \caption{NN Accuracies}
        \vskip 0.1in
    \label{tab:nn_acc}
    \begin{small}
    \begin{tabular}{c|c|c|c|c}
         \hline
         Pointset Size & \textbf{5000}& \textbf{10000}&\textbf{30000}& \textbf{60000}  \\ \hline
         \multirow{2}{*}{Hidden Layers 16-8} & 95.17 & 95.89 & 94.93 & 95.35\\ 
         &98.30 &98.60 &98.23 &98.63\\
         \hline
        \multirow{2}{*}{Hidden Layers 100-20} & 95.63 &95.97  &96.53  &96.02  \\
        & 98.39& 98.77& 98.98& 98.77\\ \hline
    \end{tabular}
    \end{small}
    \vskip -0.2in
\end{table}

\subsection{Heuristic for Finding Separators}

The complexity of the algorithm of \Cref{theorem:computingringseparator} is prohibitive for $d=15$. Therefore, we implement a heuristic variant of our algorithm based on local search to find a good separator. 
While the algorithm of \Cref{theorem:computingringseparator} explores the solution space $\SS_S$ exhaustively by considering points from all cells in $\SS_S$ (see \Cref{ssec:ringcomputation} for relevant definitions), our heuristic searches for a solution among the vertices of the arrangement of hypersurfaces in $\SS_S$.
It starts on a random vertex and moves locally to the best neighboring vertex, i.e., the vertex corresponding to the sparsest separator among those vertices defined by the same hypersurfaces except one. 
The complete pseudocode can be found in \Cref{sec:seplh}. 
As is often the case with heuristics, it does not offer reliable guarantees regarding the optimality of the resulting separator. 
The time complexity however is reduced, compared to \Cref{theorem:computingringseparator} by limiting the number of iterations.

\subsection{Test Results}

For partition trees, we built trees with different-sized query samples. These samples are used to define the order of the points placed in the leaves, as explained in \Cref{sec3}. We then test them on the same query test set. When the answer to a query range contains a lot of points, the query time is apparently longer. In order to accurately assess a tree's query time, we calculate the average query time per point in the answer. In \Cref{tab:bptree_qtime}, we can see the average query time per point of the output.
\begin{figure}[]
\vskip -0.05in
\begin{center}
\centerline{
\begin{subfigure}{0.5\columnwidth} 
        \centering
        \includegraphics[width=\columnwidth]{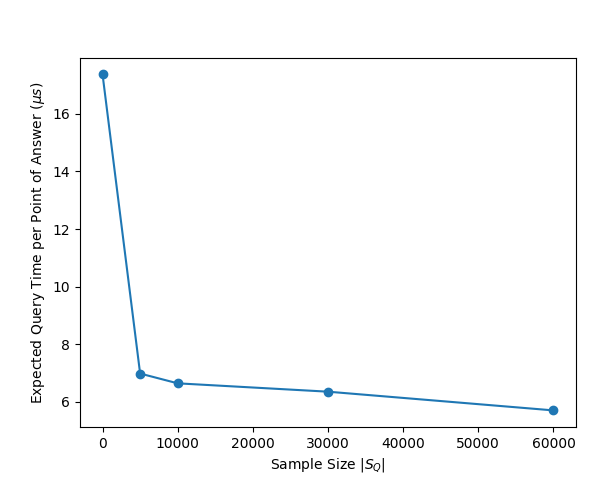}
        \caption{Partition Trees}
        \label{tab:bptree_qtime}
    \end{subfigure}
    \begin{subfigure}{0.5\columnwidth}
        \centering
        \includegraphics[width=\columnwidth]{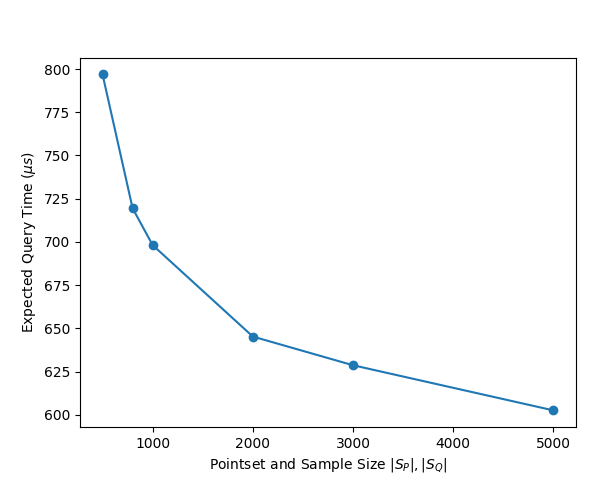}
        \caption{Ring Trees}
        \label{qt_ring}
    \end{subfigure}
}
\vskip -0.05in
\caption{Average query times for partition trees (\ref{tab:bptree_qtime}) and ring trees (\ref{qt_ring}). For partition trees, times are per output point.}
\label{icml-historical}
\end{center}
\vskip -0.1in
\end{figure}
The query times are reduced (and rather fast) as the number of samples used to decide the order of the points in the leaves grows, a fact that is in agreement with the corresponding theoretical results. Additionally, in \Cref{tab:bp_qacc} we see that the structure misses only a few of the points that it should have returned.

\begin{table}
    \centering
    \vskip -0.2in
    \caption{Test Accuracy}
    \label{tab:bp_qacc}
    \vskip 0.1in
    \begin{small}
    \begin{tabular}{c|c|c|c|c|c}
         \hline
         Sample Size $(\cdot 10^3)$& \textbf{0}& \textbf{5}&\textbf{10}& \textbf{30} & \textbf{60}  \\ \hline
         Accuracy & 98.1 & 98.1 & 98.3 & 98.3 & 98.0\\ \hline
    \end{tabular}
    \end{small}
    \vskip -0.2in
\end{table}

For ring trees, we once more build trees on different-sized query samples. However, this time we also sample from the pointset (the pointset sample being the same size as the corresponding query sample), as the preprocessing times rapidly get too long. In \Cref{qt_ring} we can see how the expected query times are reduced. We note that, in contrast with the partition trees, the query times for the ring trees are measured for the queries to their entirety (not per output point), and the accuracy of them is $1$ by definition.

\section*{Impact Statement}

This paper presents work whose goal is to advance the field
of Machine Learning by further exploring its interactions with Theoretical Computer Science. There are many potential societal
consequences of our work, none of which we feel must be
specifically highlighted here.

 \bibliography{jlann}

\begin{thebibliography}{10}

\bibitem{ABC17}
Peyman Afshani, J\'{e}r\'{e}my Barbay, and Timothy~M. Chan.
\newblock Instance-optimal geometric algorithms.
\newblock {\em J. ACM}, 64(1), March 2017.

\bibitem{Ag04}
Pankaj~K. Agarwal.
\newblock Range searching.
\newblock In Jacob~E. Goodman and Joseph O'Rourke, editors, {\em Handbook of
  Discrete and Computational Geometry, Second Edition}, pages 809--837. Chapman
  and Hall/CRC, 2004.

\bibitem{ACCLMS11}
Nir Ailon, Bernard Chazelle, Kenneth~L. Clarkson, Ding Liu, Wolfgang Mulzer,
  and C.~Seshadhri.
\newblock Self-improving algorithms.
\newblock {\em {SIAM} J. Comput.}, 40(2):350--375, 2011.

\bibitem{AB22}
Alexandr Andoni and Daniel Beaglehole.
\newblock Learning to hash robustly, guaranteed.
\newblock In Kamalika Chaudhuri, Stefanie Jegelka, Le~Song, Csaba
  Szepesv{\'{a}}ri, Gang Niu, and Sivan Sabato, editors, {\em International
  Conference on Machine Learning, {ICML} 2022, 17-23 July 2022, Baltimore,
  Maryland, {USA}}, volume 162 of {\em Proceedings of Machine Learning
  Research}, pages 599--618. {PMLR}, 2022.

\bibitem{AB02}
Martin Anthony and Peter~L. Bartlett.
\newblock {\em Neural Network Learning - Theoretical Foundations}.
\newblock Cambridge University Press, 2002.

\bibitem{B20}
Maria{-}Florina Balcan.
\newblock Data-driven algorithm design.
\newblock In Tim Roughgarden, editor, {\em Beyond the Worst-Case Analysis of
  Algorithms}, pages 626--645. Cambridge University Press, 2020.

\bibitem{BASU199728}
Saugata Basu, Richard Pollack, and Marie-Françoise Roy.
\newblock On computing a set of points meeting every cell defined by a family
  of polynomials on a variety.
\newblock {\em Journal of Complexity}, 13(1):28--37, 1997.

\bibitem{pmlr-v5-ben-david09a}
Shai Ben-David, Tyler Lu, David Pal, and Miroslava Sotakova.
\newblock Learning low density separators.
\newblock In David van Dyk and Max Welling, editors, {\em Proceedings of the
  Twelfth International Conference on Artificial Intelligence and Statistics},
  volume~5 of {\em Proceedings of Machine Learning Research}, pages 25--32,
  Hilton Clearwater Beach Resort, Clearwater Beach, Florida USA, 16--18 Apr
  2009. PMLR.

\bibitem{CW89}
Bernard Chazelle and Emo Welzl.
\newblock Quasi-optimal range searching in space of finite {VC}-dimension.
\newblock {\em Discret. Comput. Geom.}, 4:467--489, 1989.

\bibitem{DasguptaG03}
Sanjoy Dasgupta and Anupam Gupta.
\newblock An elementary proof of a theorem of johnson and lindenstrauss.
\newblock {\em Random Struct. Algorithms}, 22(1):60--65, 2003.

\bibitem{DS15}
Sanjoy Dasgupta and Kaushik Sinha.
\newblock Randomized partition trees for nearest neighbor search.
\newblock {\em Algorithmica}, 72(1):237--263, 2015.

\bibitem{FM06}
Uriel Feige and Mohammad Mahdian.
\newblock Finding small balanced separators.
\newblock In {\em Proceedings of the Thirty-Eighth Annual ACM Symposium on
  Theory of Computing}, STOC '06, page 375–384, New York, NY, USA, 2006.
  Association for Computing Machinery.

\bibitem{eigenweb}
Ga\"{e}l Guennebaud, Beno\^{i}t Jacob, et~al.
\newblock Eigen v3.
\newblock http://eigen.tuxfamily.org, 2010.

\bibitem{GR17}
Rishi Gupta and Tim Roughgarden.
\newblock A {PAC} approach to application-specific algorithm selection.
\newblock {\em {SIAM} J. Comput.}, 46(3):992--1017, 2017.

\bibitem{sepH11}
Sariel Har{-}Peled.
\newblock A simple proof of the existence of a planar separator.
\newblock {\em CoRR}, abs/1105.0103, 2011.

\bibitem{HS11}
Sariel Har-Peled and Micha Sharir.
\newblock {Relative (p,$\epsilon)$-Approximations in Geometry}.
\newblock {\em Discrete {\&} Computational Geometry}, 45(3):462--496, April
  2011.

\bibitem{lecun2010mnist}
Yann LeCun, Corinna Cortes, and CJ~Burges.
\newblock Mnist handwritten digit database.
\newblock {\em ATT Labs [Online]. Available: http://yann.lecun.com/exdb/mnist},
  2, 2010.

\bibitem{10.5555/338219.338267}
Yi~Li, Philip~M. Long, and Aravind Srinivasan.
\newblock Improved bounds on the sample complexity of learning.
\newblock In {\em Proceedings of the Eleventh Annual ACM-SIAM Symposium on
  Discrete Algorithms}, SODA '00, page 309–318, USA, 2000. Society for
  Industrial and Applied Mathematics.

\bibitem{R21}
Tim Roughgarden, editor.
\newblock {\em Beyond the Worst-Case Analysis of Algorithms}.
\newblock Cambridge University Press, 2020.

\bibitem{Sau72}
Norbert Sauer.
\newblock On the density of families of sets.
\newblock {\em Journal of Combinatorial Theory Series A}, 13:145--147, 1972.

\bibitem{She72}
Saharon Shelah.
\newblock A combinatorial problem; stability and order for models and theories
  in infinitary languages.
\newblock {\em Pacific Journal of Mathematics}, 41(1), 1972.

\bibitem{ST09}
Daniel~A. Spielman and Shang-Hua Teng.
\newblock Smoothed analysis: an attempt to explain the behavior of algorithms
  in practice.
\newblock {\em Commun. ACM}, 52(10):76–84, October 2009.

\bibitem{VC71}
Vladimir Vapnik and Alexey Chervonenkis.
\newblock On the uniform convergence of relative frequencies of events to their
  probabilities.
\newblock {\em Theory of Probability and its Applications}, 16:264--280, 1971.

\end{thebibliography}
\bibliographystyle{plain}

 \newpage
\appendix
\onecolumn

\section{Proof of \Cref{theo:etaepsilonapproxfunctionsrelative}}
\label{app:corapproxfunctions}
\corapproxfunctions*
\begin{proof}
 We first consider the case $ \EE_{q\sim \cD} [f(q)] \leq \eta$. 
By \Cref{theo:etaepsilonapproxfunctions}, with $ \eps = 1/9$, we obtain 
\[d_{\eta}\left(\EE_{q\sim \cD}[f(q)],\mu_f(\vec{x})\right)\leq  1/9.\] Hence, 
    $\left|\EE_{q\sim \cD}[f(q)]-\mu_f(\vec{x})\right|\leq  (1/9) \cdot (\eta + \EE_{q\sim \cD}[f(q)] + \mu_f(\vec{x}))\leq \frac{2\eta}{9} + \frac{\mu_f(\vec{x})}{9}$ which implies $\left|\EE_{q\sim \cD}[f(q)]-\mu_f(\vec{x})\right|\leq \frac{\eta}{3}$, since 
    \[
    \EE_{q\sim \cD}[f(q)]\geq \mu_f(\vec{x})
    \implies \left|\EE_{q\sim \cD}[f(q)]-\mu_f(\vec{x})\right|\leq \frac{2\eta}{9} + \frac{\EE_{q\sim \cD}[f(q)]}{9} \leq \frac{\eta}{3},
    \]
    and 
    \begin{align*}
        \EE_{q\sim \cD}[f(q)]< \mu_f(\vec{x})
    &\implies
       \left|\EE_{q\sim \cD}[f(q)]-\mu_f(\vec{x})\right|= \mu_f(\vec{x}) - \EE_{q\sim \cD}[f(q)] \leq \frac{2\eta}{9} + \frac{\mu_f(\vec{x})}{9}\\
    &\implies
        \left|\EE_{q\sim \cD}[f(q)]-\mu_f(\vec{x})\right|=\mu_f(\vec{x}) - \EE_{q\sim \cD}[f(q)] \leq \frac{3\eta}{8}.
    \end{align*}
    Next, we consider the case $\EE_{q\sim \cD} [f(q)] > \eta$. By \Cref{theo:etaepsilonapproxfunctions} with $\eps =1/9$,  $\left|\EE_{q\sim \cD}[f(q)]-\mu_f(\vec{x})\right|\leq  \frac{1}{9} \cdot \left(\eta + \EE_{q\sim \cD}[f(q)] + \mu_f(\vec{x})\right)\leq \frac{ 2\cdot\EE_{q\sim \cD}[f(q)]}{9} + \frac{\mu_f(\vec{x})}{9}$, which implies the claim since:
    \[
    \EE_{q\sim \cD}[f(q)]\geq \mu_f(\vec{x})
    \implies 
    \left|\EE_{q\sim \cD}[f(q)]-\mu_f(\vec{x})\right|\leq  \frac{1}{3}\cdot \EE_{q\sim \cD}[f(q)],
    \]
    and
    \begin{align*}
    \EE_{q\sim \cD}[f(q)]< \mu_f(\vec{x})
    &\implies \left|\EE_{q\sim \cD}[f(q)]-\mu_f(\vec{x})\right|= \mu_f(\vec{x}) - \EE_{q\sim \cD}[f(q)] \leq \frac{3}{8}\cdot \EE_{q\sim \cD}[f(q)].   
    \end{align*}
\end{proof}

\section{Proof of \Cref{lemma:visitingstabbing}}
\label{app:visitingstabbing}
\lemmavisitingstabbing*
\begin{proof}
    \ref{item1}. Let $\Pi$ be the path realized by a depth-first search traversal of $T$. Let $e$ be an edge of $\Pi$ stabbed
by a query range $q$. If $e$ is an edge of $T$ then $e$ contributes the same to $\sigma(\Pi,q)$ and $\sigma(T,q)$. If $e$ is not an edge of $T$, then it creates a cycle in $T$, at least two
of whose edges are stabbed by $q$. The claim then follows by observing that there is no need to charge each edge of $T$ more than twice, due to the depth-first search traversal. 

\ref{item2}. We build a complete binary tree $T$ on $n$ leaves and associate the points of $\Pi$, in sequence, with the leaves of $T$ from left to right. If the pointset associated with an internal node of $T$ is stabbed by a query range $q$, then the subtree rooted at that node must have two consecutive leaves with points $x$ and $x'$ such that
the edge $(x, x')$ of $\Pi$ is stabbed by $q$. By definition there are no more than $\sigma(\Pi,q)$
stabbed edges, hence
no more than $\sigma(\Pi,q) \cdot \lceil\log n \rceil$ stabbed pointsets associated with internal nodes.
The visiting number $\zeta(\Pi,q)$ is at most twice the number of stabbed node-pointsets plus the root.

\ref{item3}. We assign an arbitrary left-to-right order among the children of
every internal node of $T$, and let $x_1,\ldots,x_n$ denote the points in the leaves of $T$ from left to
right. We set $\Pi$ as $x_1,\ldots,x_n$. If the edge $(x_i,x_{i+1})$ is stabbed by a query range $q$, we charge this event to the unique child $u$ of the nearest common ancestor $v$ of $x_i$, and $x_{i+1}$, that is also
an ancestor of $x_i$ (or $x_i$ itself). Since the pointset associated with $v$ is necessarily stabbed by $q$, $u$ must be visited. Furthermore, such a node cannot be charged
twice. 
\end{proof}

\section{Proof of \Cref{lemma:spanningtreespreserved}}
\label{app:spanningtreespreserved}
\lemmaspanningtreespreserved*
\begin{proof}


    We apply \Cref{theo:etaepsilonapproxfunctionsrelative} with $\eta = 1/n$ on the set of functions $\cF$ defined by the pairs of points $P\times P$ and the domain $X$ corresponding to the set of query ranges $\cR$. For each $a,b\in P$, we have a function $f_{ab}\in \cF$ such that for any query range $q$: $f_{ab}(q) = 1 \iff $ $q$ stabs $\{a,b\}$. It remains to bound the VC dimension of the range space $(\cR, \cF)$. Let $\cF' = \{f_a\mid  a\in P\}$ be a set of functions defined such that for any query range $q$: $f_a(q)=1\iff a\in q$. Notice that $f_{ab}(q)=1 \iff (f_a(q)=1 \land f_b(q)=0) \lor (f_a(q)=0 \land f_b(q)=1)$.  
    The dual VC dimension of $(P,\cR)$ is $D$, meaning that the VC dimension of $(\cR,\{f_a \mid a\in P\}$ is $D$, implying that the VC dimension of $(\cR, \cF)$ is $O(D)$ (since each function of $\cF$ is defined as $O(1)$ and/or operations on functions of $\cF'$). Let $\vec{x}=(x_1,\ldots,x_m)$ be as in \Cref{theo:etaepsilonapproxfunctionsrelative}, i.e., each $x_i$ is a query range sampled independently at random from $\cD$, and $m\in O\left(n \left( D \log \left(n\right) + \log\left(\frac{1}{\delta}\right) \right)\right)$. 

    By \Cref{theo:etaepsilonapproxfunctionsrelative}, with probability at least $1-\delta$ over the choice of $\vec{x}$, for all ``light" pairs of points, i.e., pairs that are stabbed with probability at most $1/n$, we can estimate the probability through $\vec{x}$ with an additive error of $3/(8n)$ whereas for all ``heavy'' pairs of points, i.e., those that are stabbed with probability more than $1/n$, we can estimate the probability of getting stabbed with a multiplicative error of $1\pm 3/8$. 
    Now let $\Phi$ be any spanning tree of $P$ and let $E(\Phi)$ be the set of its edges. Let $E_{\ell}\subseteq E(\Phi)$ be the set of light edges of $\Phi$, i.e., edges $(a,b)\in E(\Phi)$ for which the probability that $\{a,b\}$ is stabbed is at most $1/n$ and let $E_{h}\subseteq E(\Phi)$ be the set of heavy edges of $\Phi$, i.e., edges $(a,b)\in E(\Phi)$ for which the probability that $\{a,b\}$ is stabbed is more than $1/n$. Let 
    $\tilde{\sigma}(\Phi,\vec{x})=
        m^{-1}\cdot {\sum_{i=1}^m \sigma(\Phi,x_i)}$. 
    We have,
    \begin{align*}
        \tilde{\sigma}(\Phi,\vec{x})&=
        \frac{1}{m} \cdot \sum_{i=1}^m \sigma(\Phi,x_i)\\
        &= \frac{1}{m} \cdot \sum_{i=1}^m \sum_{e\in E(\Phi)}\sigma(e,x_i)\\ &= 
   \frac{1}{m} \cdot \sum_{i=1}^m \left(\sum_{e\in E_{\ell}}\sigma(e,x_i)+\sum_{e\in E_{h}}\sigma(e,x_i) \right)\\
    & = \sum_{(a,b)\in E_{\ell}} \mu_{f_{ab}}(\vec{x}) +  \sum_{(a,b)\in E_{h}} \mu_{f_{ab}}(\vec{x})\\
    &\geq 
    \frac{5}{8}\cdot \EE_{q} \left[\sigma(\Phi,q)\right] - \frac{3}{8} .
    \end{align*}
   Similarly,
      \begin{align*}
       \tilde{\sigma}(\Phi,\vec{x})&= \frac{1}{m} \cdot \sum_{i=1}^m \sum_{e\in E(\Phi)}\sigma(e,x_i)\\
    & \leq  
    \sum_{(a,b)\in E_{\ell}} 
    \left(\EE_{q}[f_{ab}(q)] +\frac{3}{8n}\right)+
       \sum_{(a,b)\in E_{h}} 
    \frac{11}{8}\cdot \EE_{q}[f_{ab}(q)]\\
    &\leq 
    \frac{11}{8}\cdot \EE_{q\sim\cD_Q} \left[\sigma(\Phi,q)\right]  + \frac{3}{8} .
    \end{align*}
    Hence, for any spanning tree $\Phi$ of $P$, $\tilde{\sigma}(\Phi,\vec{x})$ is in
    \[\left[\frac{5}{8}\cdot \EE_{q\sim\cD_Q} \left[\sigma(\Phi,q)\right] - \frac{3}{8} , \frac{11}{8}\cdot \EE_{q\sim\cD_Q} \left[\sigma(\Phi,q)\right] + \frac{3}{8} \right]. 
    \]
\end{proof}

\section{Heuristic for finding separators}
\label{sec:seplh}
\begin{algorithm}
\caption{Find Separator - Locality Heuristic}\label{seplh}
\begin{algorithmic}[1]
    \STATE {\bfseries Input:} {\textit{pointset} $P$, \textit{query sample} $S_Q$}
    \STATE $min\_mass \gets n$
    \FOR{i in $[m]$}
        \STATE $A \gets$ random subset of $P\cup S_Q$ of size $d+1$
        \WHILE{\textbf{true}}
            \STATE $temp\_mass \gets min\_mass$
            \FOR{each possible ring separator $\varrho$ defined by $neighborhood(A)$}
                \STATE $mass \gets |R(\varrho) \cap S_Q|$
                \IF{$mass < temp\_mass$ \textbf{and} $\varrho$ is $P-balanced$}
                    \STATE $temp\_mass \gets mass$
                    \STATE $t\_ans \gets \varrho$
                \ENDIF
            \ENDFOR
            \IF{$temp\_mass < min\_mass$}
                \STATE $min\_mass \gets temp\_mass$
                \STATE $ans \gets t\_ans$
            \ELSE
                \STATE \textbf{break}
            \ENDIF
        \ENDWHILE
    \ENDFOR
    \STATE \textbf{return} $ans$
\end{algorithmic}
\end{algorithm}

\end{document}